\newtheorem{thm}{Theorem}[section]
\newtheorem{prop}[thm]{Proposition}
\newtheorem{lemma}[thm]{Lemma}
\newtheorem{remark}[thm]{Remark}
\newtheorem{cor}[thm]{Corollary}
\newtheorem{claim}[thm]{Claim}
\DeclareMathOperator{\tr}{tr}
\begin{document}
\title{Points-Polynomials Incidence Theorem with an Application to Reed-Solomon Codes}

\author{
     Itzhak Tamo \thanks{
      Itzhak Tamo is with the Department of Electrical Engineering--Systems, Tel Aviv University, Tel Aviv, Israel. e-mail:	 tamo@tauex.tau.ac.il.  
      This work was supported by the European Research Council (ERC grant number 852953).} }
	
\maketitle

\begin{abstract}
This paper focuses on incidences over finite fields, extending  to higher degrees a result by Vinh \cite{VINH20111177} on the number of point-line incidences in the plane $\mathbb{F}^2$, where $\mathbb{F}$ is a finite field. Specifically, we present a bound on the number of incidences between points and polynomials of bounded degree in $\mathbb{F}^2$. Our approach employs a singular value decomposition of the incidence matrix between points and polynomials, coupled with an analysis of the related group algebras. This bound is then applied to coding theory, specifically to the problem of average-radius list decoding of Reed-Solomon (RS) codes. We demonstrate that RS codes of certain lengths are average-radius list-decodable with a constant list size, which is dependent on the code rate and the distance from the Johnson radius. While a constant list size for list-decoding of RS codes in this regime was previously established, its existence for the stronger notion of average-radius list-decoding was not known to exist.

\end{abstract}

\section{Introduction}
Discrete geometry is a dynamic field of research with applications in computer science, particularly in computational geometry \cite{sheffer_2022} and coding theory \cite{Dvir2012IncidenceTheorems}. A key area of focus within this field is the study of incidence theorems, which explore the relationships between geometric objects such as points, lines, and planes, with an emphasis on their intersections and configurations.

Consider a set of points $P$ and a set of lines $L$ in the plane $\mathbb{F}^2$ for a given field $\mathbb{F}$. An incidence is defined as a pair $(p, \ell) \in P \times L$ where the point $p$ lies on the line $\ell$. Incidence theorems aim to provide bounds on the number of such incidences, denoted by $I(P,L)$, dependent on the number of points in $P$ and lines in $L$.

A prominent example of such a theorem is the renowned Szemer\'{e}di-Trotter theorem \cite{Szemeredi1983Extremal,Szemeredi1983Combinatorial}, which establishes bounds on the number of incidences between points and lines in the plane $\mathbb{R}^2$. While this type of theorems have been extensively studied over $\mathbb{R}$ and to some extent over $\mathbb{C}$ \cite{Toth2015}, the scenario in finite fields presents additional challenges, making many incidence problems more complex. This paper focuses on incidences over finite fields.

 Bourgain, Katz, and Tao \cite{BourgainKatzTao} established that the number of incidences between a set of $N$ points and $N$ lines in $\mathbb{F}^2$, for a prime field $\mathbb{F}$ of order $q$ and $N = q^\alpha$ with $\alpha < 2$, is bounded by $O(N^{3/2-\varepsilon})$ for some $\varepsilon$ dependent on $\alpha$. The exact relationship between $\varepsilon$ and $\alpha$, however, remains challenging to ascertain. In cases where $N = O(\log \log \log (q))$, Grosu \cite{Grosu2014} demonstrated that the point and line sets can be embedded into $\mathbb{C}^2$, while preserving many incidences. By applying the Szemer\'{e}di-Trotter theorem in $\mathbb{C}^2$ \cite{Toth2015}, he then derived that $I(P, L) = O(N^{4/3})$. Vinh \cite{VINH20111177}, utilizing spectral graph theory methods proved the following.
\begin{thm}(\cite[Theorem 3]{VINH20111177})
\label{vinh-thm}
    Let $P$ be a collection of points and $L$ be a collection of lines in $\mathbb{F}^2$, where $\mathbb{F}$ is a prime field of order $q$. Then, we have
\[
I(P,L) \leq \frac{|P||L|}{q}+\sqrt{|P||L|q} .
\]
\end{thm}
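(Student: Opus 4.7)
The plan is to prove Theorem~\ref{vinh-thm} via a spectral argument on the bipartite point–line incidence graph, essentially the expander mixing lemma, where the crucial input is that two distinct points determine a unique line.

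First I would set up the full bipartite incidence graph $G$ whose left part $V_1$ consists of all $q^2$ points of $\mathbb{F}^2$ and whose right part $V_2$ consists of all $q^2+q$ lines (the $q^2$ non-vertical lines $y=ax+b$ together with the $q$ vertical lines $x=c$). This graph is biregular: each point lies on exactly $q+1$ lines (one per slope in $\mathbb{F}\cup\{\infty\}$), and each line contains exactly $q$ points. Let $M\in\{0,1\}^{V_1\times V_2}$ be the incidence matrix; we want a bound on $\mathbf{1}_P^{\top} M \mathbf{1}_L = I(P,L)$.

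The key step is to compute the singular values of $M$ through $MM^{\top}$. The $(p,p')$ entry of $MM^{\top}$ counts the lines through both $p$ and $p'$: it equals $q+1$ when $p=p'$ and exactly $1$ otherwise, since two distinct points determine a unique line. Hence $MM^{\top}=qI+J$, where $J$ is the all-ones matrix of size $q^2\times q^2$. Its eigenvalues are $q^2+q$ on the all-ones vector and $q$ on the orthogonal complement, giving singular values $\sigma_1=\sqrt{q(q+1)}$ (once) and $\sigma_2=\cdots=\sqrt{q}$. Decomposing $\mathbf{1}_P$ and $\mathbf{1}_L$ into their components along the top singular vectors $\tfrac{1}{\sqrt{q^2}}\mathbf{1}$ and $\tfrac{1}{\sqrt{q^2+q}}\mathbf{1}$ and their orthogonal parts, the top-singular contribution yields exactly the main term
\[
\frac{\sigma_1\,|P|\,|L|}{\sqrt{q^2}\,\sqrt{q^2+q}}=\frac{|P||L|}{q},
\]
while the remaining components contribute at most $\sigma_2\|\mathbf{1}_P^{\perp}\|\,\|\mathbf{1}_L^{\perp}\|\leq \sqrt{q}\sqrt{|P||L|}$, which is the expander mixing lemma applied to a biregular bipartite graph. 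Adding these two estimates gives
\[
I(P,L)\leq \frac{|P||L|}{q}+\sqrt{q|P||L|},
\]
which is exactly the claimed bound.

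I do not expect a serious obstacle here: the calculation $MM^{\top}=qI+J$ is immediate from the axiom ``two points determine a line,'' and the spectral decomposition of $qI+J$ is elementary, so the entire argument reduces to a clean two-line application of the expander mixing lemma. The substantive difficulty—and the reason this paper extends rather than applies the theorem—will come later when lines are replaced by polynomial curves of higher degree, since then the analog of $MM^{\top}$ is no longer of the form $aI+bJ$ and the spectral analysis must be carried out through a richer (e.g.\ group-algebraic) decomposition.
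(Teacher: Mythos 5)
Your proof is correct and is essentially Vinh's original argument: include the $q$ vertical lines so the incidence bipartite graph is biregular, observe that $MM^{\top}=qI+J$ because two distinct points determine a unique line, and apply the expander mixing lemma via the singular value gap $\sqrt{q(q+1)}$ versus $\sqrt{q}$. Note, however, that the paper does not prove \Cref{vinh-thm} — it is quoted from \cite{VINH20111177} purely as context. What the paper does prove is the polynomial generalization \Cref{main-thm} and a slight sharpening for lines, \Cref{vinh-improvement}, and there it takes a genuinely different route: instead of recognizing the Gram matrix as $aI+bJ$, it identifies $T^{\ast}T$ and $TT^{\ast}$ with multiplication operators on the group algebras $\mathbb{C}[\mathbb{F}^2]$ and $\mathbb{C}[\mathbb{F}^k[x]]$, diagonalizes them in the character basis, and then bounds the projections of the indicator vectors onto each eigenspace separately rather than invoking the mixing lemma as a black box. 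One structural point worth flagging: your identity $MM^{\top}=qI+J$ relies crucially on including the vertical lines, whereas the paper's polynomial framework $\mathbb{F}^2[x]$ keeps only the $q^2$ non-vertical lines, so the corresponding Gram matrix has zero off-diagonal entries on pairs of points sharing an $x$-coordinate and is not of $aI+bJ$ form. The character decomposition handles that matrix directly, and this flexibility is exactly what allows the argument to extend to higher degree — as your closing remark correctly anticipates.
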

Specifically, \Cref{vinh-thm} was derived using the expander-mixing lemma and an analysis of the eigenvalues of the corresponding incidence graph.

We adopt a methodology similar to that in \cite{VINH20111177}, extending \Cref{vinh-thm} from the scenario of lines in a plane to the case where these lines are substituted by arbitrary polynomials of bounded degree over any finite field. We establish the following theorem, where \(\mathbb{F}\) denotes any finite field of order \(Q\), \(Q\) being a prime power, and $\mathbb{F}^k[x]$ is the space of polynomials of degree less than $k\in \mathbb{N}$. 
\begin{restatable}{thm}{mainthm}
\label{main-thm}
Let $L\subseteq \mathbb{F}^k[x]$ and $P\subseteq \mathbb{F}^2$ be a set of polynomials of degree less than $k$ and points in the plane, respectively. Then, the number $I(L,P)$ of incidences  between points and polynomials satisfies 
\begin{equation}
\label{eq:main-result}
\Big|I(L,P)-\frac{|L||P|}{Q}\Big|\leq 
\sqrt{|L||P|(Q+|L|(k-1))(1-\frac{1}{Q})}\leq \sqrt{|L||P|(Q+|L| k)} 
\end{equation} 
\end{restatable}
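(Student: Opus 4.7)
The plan is to mirror the expander-mixing / second-moment framework used by Vinh in the proof of \Cref{vinh-thm}, replacing ``two distinct lines share at most one point'' by its degree-$k$ analogue ``two distinct polynomials of degree less than $k$ share at most $\min(k-1,Q)$ common $x$-coordinates.'' Unwinding, the Cauchy--Schwarz computation below captures precisely the quantitative content of an SVD of the $|\mathbb{F}^k[x]|\times Q^2$ incidence matrix, together with the additive-Fourier / group-algebra structure on $(\mathbb{F}^k[x],+)$ alluded to in the abstract.

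For each $p=(a,b)\in\mathbb{F}^2$ define $r(p):=|\{f\in L : f(a)=b\}|$. Two moment identities drive the proof. Since every polynomial contributes exactly $Q$ incidences, $\sum_{p\in\mathbb{F}^2}r(p)=Q|L|$, and the ambient mean is $\bar r = |L|/Q$. For the second moment,
\[
\sum_{p\in\mathbb{F}^2}r(p)^2 \;=\; \sum_{f,g\in L}|\{a\in\mathbb{F} : f(a)=g(a)\}|.
\]
Diagonal terms ($f=g$) contribute $Q$ each, and off-diagonal terms count the roots of the nonzero polynomial $f-g\in\mathbb{F}^k[x]$, hence are at most $\min(k-1,Q)$. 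This root bound is the only place the polynomial structure is used.

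Writing $I(L,P)-|L||P|/Q=\sum_{p\in P}(r(p)-\bar r)$ and applying Cauchy--Schwarz together with the variance identity $\sum_{p\in\mathbb{F}^2}(r(p)-\bar r)^2=\sum_p r(p)^2 - |L|^2$ yields
\[
\Bigl|I(L,P)-\frac{|L||P|}{Q}\Bigr|^2 \;\leq\; |P|\Bigl(\sum_{p\in\mathbb{F}^2}r(p)^2 - |L|^2\Bigr).
\]
The main obstacle is the final algebraic reduction. When $k-1\le Q$, substituting the second-moment bound using $|f\cap g|\le k-1$ reduces the claimed inequality to the non-negativity $|L|(Q-k+1)+(k-2)Q\ge 0$, which is immediate. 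When $k-1>Q$ one instead substitutes the sharper estimate $|f\cap g|\le Q$, and the analogous reduction becomes $|L|(k-1-Q)+Q\ge 0$, again trivially true. This brief case split is the only non-mechanical step of the argument.
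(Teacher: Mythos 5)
Your proof is correct, and it takes a genuinely different and more elementary route than the paper. You run a direct second-moment argument on the column sums $r(p)=|\{f\in L:f(a)=b\}|$: the first-moment identity $\sum_p r(p)=Q|L|$, the second-moment identity $\sum_p r(p)^2=\sum_{f,g\in L}|\{a:f(a)=g(a)\}|$ together with the root bound $\deg(f-g)\le k-1$, the variance identity $\sum_p (r(p)-\bar r)^2=\sum_p r(p)^2-|L|^2$, and a single Cauchy--Schwarz over $P$. I checked the final algebra: when $k-1\le Q$ the comparison to the stated bound reduces, after clearing denominators, to $|L|(Q-k+1)+Q(k-2)\ge 0$, and when $k-1>Q$ to $|L|(k-1-Q)+Q\ge 0$, both as you claim (the only degenerate boundary is $k=1$, $L=\emptyset$, where the theorem is vacuous). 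The paper instead computes the full singular value decomposition of the incidence matrix $T$ by representing $T^*T$ and $TT^*$ as multiplication operators on the group algebras $\mathbb{C}[\mathbb{F}^2]$ and $\mathbb{C}[\mathbb{F}^k[x]]$, diagonalizing via characters, bounding the projections of $1_P$ and $1_L$ onto the relevant eigenspaces (Lemmas~\ref{decomposition2} and~\ref{tikitaka}), and then evaluating $1_L^t T 1_P$ in the singular basis. Your second-moment computation is exactly $1_L^t TT^* 1_L$ done combinatorially, so it extracts only what is needed from one side of the SVD; the paper's heavier machinery is what allows it to also extract the extra $(1-1/Q)$ factor on the $P$-side (from Lemma~\ref{decomposition2}), which is then leveraged for the sharper line-incidence bound of Proposition~\ref{vinh-improvement}. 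Your argument would not give that refinement directly (Cauchy--Schwarz over $P$ only produces the $|P|$ factor), but it reaches the stated Theorem~\ref{main-thm} more quickly and without any character theory, which is closer in spirit to Vinh's original expander-mixing proof while bypassing even the spectral graph-theory packaging.
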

To derive \Cref{main-thm}, we analyze the group algebras \(\mathbb{C}[\mathbb{F}^k[x]]\) and \(\mathbb{C}[\mathbb{F}^2]\), incorporating some basic character theory. 
In the process of establishing \Cref{main-thm}, we present a singular value decomposition (SVD) of the incidence matrix of the points in the plane and all polynomials of some bounded degree. This SVD, potentially of independent interest, facilitates a modest improvement to \Cref{vinh-thm}, as detailed below.
\begin{restatable}{prop}{vinhimprove}
\label{vinh-improvement}
Let $L\subseteq \mathbb{F}^2[x]$ and $P\subseteq \mathbb{F}^2$ be a set of lines and points in the plane, respectively. Then, the number $I(L,P)$ of incidences between points and lines satisfies 
$$\Big|I(P,L)-\frac{|P||L|}{Q} \Big|\leq \sqrt{(|P||L|Q}(1-\frac{1}{Q}).$$
\end{restatable}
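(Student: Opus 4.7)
The plan is to apply the singular value decomposition of the full incidence matrix that is computed in the proof of \Cref{main-thm}, specialized to $k=2$, and then exploit integrality of the slope- and column-multiplicities, which is a feature not used by the more general argument.

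Let $M$ be the $Q^2 \times Q^2$ incidence matrix with rows indexed by all lines $y=ax+b$ in $\mathbb{F}^2[x]$ and columns by all points of $\mathbb{F}^2$. Since two distinct lines in $\mathbb{F}^2[x]$ meet in exactly one point when their slopes differ and in none when they are parallel, $MM^T = QI + J - N$, where $N$ is the block-diagonal parallel-class indicator (one $Q \times Q$ all-ones block per slope). This matrix is diagonalized by the decomposition $\mathbb{C}^{Q^2} = \mathbb{C}\hat{\mathbf{1}} \oplus U_0 \oplus U_Q$, where $U_0$ is the $(Q{-}1)$-dimensional space of vectors that are constant on each slope class with total sum zero, and $U_Q$ is the orthogonal complement inside $\hat{\mathbf{1}}^\perp$. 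The eigenvalues of $MM^T$ on these three subspaces are $Q^2$, $0$, and $Q$, respectively, so $M$ has singular values $Q$, $0$, and $\sqrt{Q}$. An identical analysis of $M^TM = QI + J - V$, where $V$ is the equality-of-$x$-coordinate indicator, yields the right singular subspaces, with the right null space consisting of vectors that are constant on each vertical line with zero total sum.

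Decomposing $v_L$ and $v_P$ (the characteristic vectors of $L$ and $P$) along the three singular subspaces on each side, the top component contributes the main term $Q\cdot(|L|/Q)(|P|/Q) = |L||P|/Q$, the kernel components contribute zero, and the Cauchy--Schwarz inequality gives
\[
\Bigl| I(P,L) - \tfrac{|P||L|}{Q} \Bigr| \leq \sqrt{Q}\,\|\Pi_L v_L\|\,\|\Pi_P v_P\|,
\]
where $\Pi_L, \Pi_P$ are the projections onto the middle singular subspaces on each side. Writing $n_a$ for the number of lines in $L$ of slope $a$, the projection of $v_L$ onto the left null space $U_0$ equals $\sum_a (n_a/Q - |L|/Q^2)\,\mathbf{1}_{\{\mathrm{slope}=a\}}$, and combining this with the rank-one projection gives $\|\Pi_L v_L\|^2 = |L| - \frac{1}{Q}\sum_a n_a^2$. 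The decisive step, and the one that yields the improvement beyond what \Cref{main-thm} at $k=2$ already implies, is the simple observation that each $n_a$ is a non-negative integer, so $n_a^2 \geq n_a$ and hence $\sum_a n_a^2 \geq \sum_a n_a = |L|$, whence $\|\Pi_L v_L\|^2 \leq |L|(1-1/Q)$. The same computation with ``slope'' replaced by ``$x$-coordinate'' yields $\|\Pi_P v_P\|^2 \leq |P|(1-1/Q)$, and substituting into the Cauchy--Schwarz bound delivers the desired inequality $\sqrt{|P||L|Q}(1-1/Q)$. The only real obstacle is identifying the two null spaces cleanly; once $U_0$ and its right analog are in hand, the refinement over the generic bound $\|\Pi_L v_L\|^2 \leq |L|$ is an immediate consequence of integrality.
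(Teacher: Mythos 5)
Your proposal is correct and arrives at the same numerical bound, but it travels a genuinely different road. The paper derives \Cref{vinh-improvement} as a quick corollary of the character-theoretic machinery built for general $k$: it plugs into the proof of \Cref{main-thm} and replaces the general bound $|\beta_1|^2 \le Q^{-k+1}\ell(Q+\ell(k-1))$ with the sharper $|\beta_1|^2 \le \ell(Q-1)/Q$ furnished by \Cref{remark1}, which in turn comes from observing that the two correction terms in \eqref{good5} cancel exactly when $k=2$. You instead redevelop the $k=2$ spectral decomposition from scratch in elementary combinatorial terms: you write $MM^T = QI + J - N$ with $N$ the slope-class block matrix (and symmetrically $M^TM$ with the $x$-coordinate block matrix), read off the eigenspaces directly as the constants, the slope-class functions with zero sum, and their orthogonal complement, and compute the middle projection exactly as $\|\Pi_L v_L\|^2 = |L| - \tfrac{1}{Q}\sum_a n_a^2$. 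Your bound $\sum_a n_a^2 \ge \sum_a n_a = |L|$ via integrality is arithmetically the same fact the paper's Remark exploits (there, disguised as the root bound $Z_{\mathbb F}(f-f')\le 1$ for distinct lines, which reduces to $\sum_a n_a(n_a-1)\ge 0$), and both proofs then finish with the identical Cauchy--Schwarz step across the middle singular subspace with singular value $\sqrt{Q}$. The trade-off is clear: your route is self-contained and makes the source of the $k=2$ improvement transparent (no characters, no general-$k$ estimates), while the paper's route costs nothing extra once the general framework exists and unifies the treatment for all $k$.
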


As previously noted, incidence theorems have also been applied in the realm of coding theory. In this context, we present another application by employing \Cref{main-thm} to address the problem of average-radius list-decoding of Reed-Solomon (RS) codes, which we discuss next.

\vspace{0.4cm}
\noindent\textbf{Application to Reed-Solomon codes.} An error-correcting code \( C \subseteq \mathbb{F}^n \), where \(\mathbb{F}\) is a finite field, consists of a subset of vectors known as codewords, each of length \( n \). The primary objective in constructing such a code \( C \) is to facilitate the retrieval of an original codeword \( c \in C \) from its corrupted version, while simultaneously maximizing the size of \( C \). The size of \( C \) is quantified by its rate \( R \), defined as \( R := \frac{\log_{|\mathbb{F}|} |C|}{n} \).

In the standard problem of unique decoding, the goal is to accurately reconstruct a codeword \( c \) from any corrupted version that differs from \( c \) in a limited number of positions. However, there are scenarios where it is necessary to reconstruct \( c \) from a version that is significantly corrupted \cite{elias2006errorcorrecting,blinovskii1986bounds,ahlswede1973channelcapacities}. In such cases, if the code \( C \) is not sufficiently small, the task of uniquely recovering \( c \) becomes impossible. This leads to the concept of list-decoding, first introduced by Elias \cite{peter1957listdecoding} and Wozencraft \cite{wozencraft1958listdecoding}, where unique decoding is no longer feasible. Instead, the objective shifts to identifying a small list of potential codewords from \( C \) that includes the original codeword \( c \).

Algebraic codes are central to the study of list-decoding, serving as primary examples of codes that can be efficiently list-decoded. A particularly notable family within this category is Reed-Solomon (RS) codes \cite{reed1960polynomial}, which are constructed from the evaluations of low-degree polynomials. In the 1990s, efficient list-decoding algorithms for these codes were developed. Sudan \cite{sudan1997decoding}, with subsequent improvements by Guruswami and Sudan \cite{Guru-sudan-algo}, demonstrated that RS codes can be efficiently list-decoded up to the Johnson radius, a specific level of corruption. Further research \cite{CassutoBruck2004,Dvir} has shown that the list size is constant in the rate \( R \) of the code and the distance \( \varepsilon > 0 \) of the list-decoding   from the Johnson radius.

Average-radius list-decoding\cite{GuruswamiNarayanan2014,9611262} is another approach to decoding with a small list size, though it has been considered less in the literature compared to list-decoding. In essence, while a good list-decodable code ensures that not too many codewords are contained within any small Hamming ball, average radius list-decoding requires that for any vector \( v \in \mathbb{F}^n \), there are not too many codewords whose average Hamming distance from \( v \) is small. Upon examining the precise definitions of these two notions, it is clear that average-radius list-decoding is a stronger concept than list-decoding, as it implies the latter.


For RS codes, it is known that a random RS code, with high probability, achieves the capacity of average-radius list-decoding with optimal list size \cite{gopigopi}. In this work, we present a result similar to those found in \cite{CassutoBruck2004,Dvir}, demonstrating that any RS code with a length proportional to its field size is average-radius list-decodable, with a list size  at most $O\left(\frac{1}{\varepsilon \sqrt{R}}\right)$, where $R$ is the code rate and $\varepsilon>0$ is the distance from the Johnson radius, as detailed in \Cref{list-decoding-cor1}.

\vspace{0.4cm}
\noindent{\bf Organization.}
In \Cref{preliminaries}, we provide the necessary background in character theory along with some essential notations. We then proceed in \Cref{right-singular-vectors} to analyze the points-polynomials incidence matrix and derive its right-singular vectors. Continuing in \Cref{left-singular-vectors}, we identify the left-singular vectors of the incidence matrix. Finally, in \Cref{applicatoins}, we present our main results,  the bound on the number of incidences between points and polynomials, which is then applied to the problem of average-radius list-decoding of RS codes.

\section{Preliminaries and Notations}
\label{preliminaries}
Let $\mathbb{F}$ be a finite field of size $Q$ and let  $\mathbb{F}^k[x]$ be the space of polynomials of degree less than $k$ over  $\mathbb{F}$, which we identify with the vector space $\mathbb{F}^k$. 
If $\mathbb{F}$ is a degree $m$ extension over $\mathbb{F}_q$ the prime field of size $q$, then the trace function $\tr:\mathbb{F}\rightarrow \mathbb{F}_q$ is defined as $$\tr(x)=x+x^q+\ldots+x^{q^{m-1}}.$$
Due to the linearity of the  Frobenius-homomorphism, it follows that the trace function is also linear, i.e., $\tr(x+y)=\tr(x)+\tr(y).$ Further, it is well-known that the set of characters (group-homomorphisms) from $\mathbb{F}^k$ to $\mathbb{C}^*$ are defined by the vectors of $\mathbb{F}^k$ as follows. A vector  $v
\in \mathbb{F}^k$ defines the character   
$$\chi_v(x):=e(\tr(<v,x>)):\mathbb{F}^k\rightarrow \mathbb{C}^*,$$
where $<x,v>=\sum_i x_iv_i$ is the standard inner product between vectors, and $e(x)=e^{\frac{2\pi i x}{q}}$. 

It is easy to verify that $\chi_v(x)$ indeed defines a group homomorphism, as for any $x,y \in \mathbb{F}$ 
\begin{align*}
\chi_v(x+y)&=e(\tr(<v,x+y>))=e(\tr(<v,x>+<v,y>))\\
&=e(\tr(<v,x>))\cdot e(\tr(<v,y>))=\chi_v(x)\cdot \chi_v(y)
\end{align*}

\vspace{0.4cm}
\noindent\textbf{The Points-Polynomials Incidence Graph.} 
Consider the following incidence bipartite graph, derived from the point-polynomial incidences, as follows. 
The vertex parts are 
 $\mathbb{F}^k[x]$, the set of all polynomials over $\mathbb{F}$ of degree less than $k$,  and $\mathbb{F}^2$, the $Q^2$ points in the plane.

A polynomial (vertex) $f \in \mathbb{F}^k[x]$ is connected to a point $(\alpha, \beta) \in \mathbb{F}^2$ if and only if $f(\alpha) = \beta$. It is straightforward to verify that the degree of a polynomial vertex and a point vertex in the plane is $Q$ and $Q^{k-2}$, respectively.

 Let $T$ be the $Q^k \times Q^2$ points-polynomials incidence matrix (derived from the above bipartite graph) over $\mathbb{C}$. The rows and columns of $T$ are indexed by all polynomials of degree less than $k$ and points in the plane, respectively. The entry $T_{f,(x,y)}$ for a polynomial $f \in \mathbb{F}^k[x]$ and a point $(x,y) \in \mathbb{F}^2$ is $1$ if and only if the graph of $f$ passes through the point $(x,y)$, i.e., $f(x) = y$; otherwise, it is $0$.

Our main result will follow by determining the singular value decomposition (SVD) of the  matrix $T$, including the singular values and their corresponding left and right singular vectors. We then apply this decomposition in conjunction with an analysis similar to the expander mixing lemma \cite{ALON198815}. To this end, we will explore the eigenvalues and eigenvectors of the symmetric matrices $T \cdot T^*$ and $T^* \cdot T$.

\begin{remark}
    One might question why we do not directly apply the expander mixing lemma \cite{ALON198815} to the aforementioned incidence graph. The rationale is that employing it as a black box, akin to the method used in \cite{VINH20111177}, results in a weaker bound than what can be achieved through a more careful analysis of the matrix $T$. This is evident in the modest improvement we achieve over \cite[Theorem 3]{VINH20111177}. However, it is noteworthy that this improvement becomes more pronounced as the degree of the polynomials increases.
\end{remark}

\vspace{0.4cm}
\noindent\textbf{Singular Value Decomposition.} Let $T$ be an $m \times n$ real or complex matrix (assume w.l.o.g that $m \geq n$). In our case, $T$ is an $Q^k\times Q^2$ matrix. The Singular Value Decomposition (SVD) of $T$ is a factorization of the form $T = U\Sigma V^*$, where:
\begin{itemize}
    \item $U$ is an $m \times m$ complex unitary matrix,
    \item $\Sigma$ is an $m \times n$ rectangular diagonal matrix with non-negative real numbers on the diagonal,
    \item $V$ is an $n \times n$ complex unitary matrix.
\end{itemize}
The nonzero elements $\sigma_{i,i}$ on the main diagonal of $\Sigma$ are the singular values of $T$. The columns of $U$ (left-singular vectors) and $V$ (right-singular vectors) form two sets of orthonormal bases $u_1, \ldots, u_m$ and $v_1, \ldots, v_n$, respectively.

Writing $\Sigma = 
\begin{pmatrix}
D\\
0_{(m-n) \times n}
\end{pmatrix}
$, where $D$ is an $n \times n$ diagonal matrix and $0_{(m-n) \times n}$ is an $(m-n) \times n$ zero matrix, we have:
\[ T \cdot T^* = U\Sigma V^*(U\Sigma V^*)^* = U\Sigma \Sigma^* U^* = U \begin{pmatrix} D^2 & 0 \\ 0 & 0 \end{pmatrix} U^*, \]
and
\[ T^* \cdot T = V \Sigma^* \Sigma V^* = V D^2  V^*. \]
It can be easily verified that both $T \cdot T^*$ and $T^* \cdot T$ are symmetric positive semi-definite matrices and have the same positive eigenvalues. Furthermore, a positive  $\lambda $ is an eigenvalue of $T^* \cdot T$ corresponding to an eigenspace of dimension $d$ if and only if it is also an eigenvalue of $T \cdot T^*$ corresponding to an eigenspace of the same dimension $d$. This equivalence holds if and only if the matrix $T$ has exactly $d$ singular values equal to $\sqrt{\lambda}$.
.
\begin{remark}
In spectral graph theory, it is customary to consider the adjacency matrix of a graph and analyze its eigenvalues and eigenvectors. However, in our approach, we find it more convenient to work directly with the points-polynomials incidence matrix $T$ and analyze it via its singular value decomposition. These two matrices are closely related, as can be seen by noting that the matrix 
\[ A = \begin{pmatrix}
0 & T \\
T^* & 0
\end{pmatrix}, \]
is the adjacency matrix of the incidence graph. Furthermore, the relationship between the eigenvalues and eigenvectors of $A$ and the SVD of $T$ is well-established (see, for example, the discussion before Theorem 5.1 in \cite{HAEMERS1995593}) and is summarized in the following claim.

\begin{claim}
Let $A = \begin{pmatrix}
    0 & T \\
    T^* & 0
\end{pmatrix}$ be an $n \times n$ symmetric matrix, with eigenvalues $\lambda_1 \geq \lambda_2 \geq \ldots \geq \lambda_n$. Then, $\lambda_i = -\lambda_{n-i+1}$ for $i = 1, \ldots, n$. Furthermore, the positive eigenvalues of $A$ correspond to the singular values of $T$; they are also the square roots of the nonzero eigenvalues of $T \cdot T^*$ and $T^* \cdot T$.
\end{claim}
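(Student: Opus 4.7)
The plan is to exhibit an explicit correspondence between eigenvectors of the block matrix $A$ and singular-vector pairs of $T$, which will give both halves of the claim simultaneously.

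First I would verify the symmetry $\lambda_i=-\lambda_{n-i+1}$ by a sign-flip trick. Suppose $(u,v)^\top$ is an eigenvector of $A$ with eigenvalue $\lambda$; unpacking the block multiplication, this is equivalent to $Tv=\lambda u$ and $T^*u=\lambda v$. Applying $A$ to $(u,-v)^\top$ then yields $(-Tv,T^*u)^\top=(-\lambda u,\lambda v)^\top=-\lambda(u,-v)^\top$. Thus $(u,v)\mapsto(u,-v)$ is an involution on the set of eigenvectors of $A$ that swaps the $\lambda$- and $(-\lambda)$-eigenspaces, so their dimensions agree for every $\lambda$. Listing eigenvalues in decreasing order, this forces $\lambda_i=-\lambda_{n-i+1}$, and also shows that $0$ appears with even multiplicity plus any forced parity coming from dimensions, which is not needed for the statement.

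Next I would identify the positive eigenvalues of $A$ with the singular values of $T$. Start from the SVD $T=U\Sigma V^*$ promised by the preceding discussion: let $\sigma>0$ be a singular value with left singular vector $u$ (a column of $U$) and right singular vector $v$ (a column of $V$), so $Tv=\sigma u$ and $T^*u=\sigma v$. Then the two equations above are satisfied with $\lambda=\sigma$, so $\tfrac{1}{\sqrt{2}}(u,v)^\top$ is a unit eigenvector of $A$ with eigenvalue $\sigma$; and by the sign-flip, $\tfrac{1}{\sqrt{2}}(u,-v)^\top$ is a unit eigenvector with eigenvalue $-\sigma$. Doing this for each positive singular value produces a pair of orthonormal eigenvectors of $A$ for each such $\sigma$, and these pairs are mutually orthogonal across distinct $\sigma$'s since the $u$'s (resp.\ $v$'s) are.

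Finally I would check that this accounts for every nonzero eigenvalue of $A$. A dimension count works: if $T$ has $r$ nonzero singular values, the above construction exhibits $2r$ orthonormal eigenvectors with nonzero eigenvalues, so the rank of $A$ is at least $2r$; conversely the rank of $A$ is at most $2\,\mathrm{rank}(T)=2r$, so we have all of them. Combining this with the already-noted fact that the nonzero singular values of $T$ are precisely the square roots of the nonzero eigenvalues of both $TT^*$ and $T^*T$ completes the claim. The only subtle point (and the closest thing to an obstacle) is being careful that the sign-flip correspondence is indeed a bijection on a basis rather than merely on individual vectors, which is why I phrase step~1 at the level of eigenspaces and invoke the involution on an orthonormal eigenbasis.
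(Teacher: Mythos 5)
The paper does not actually prove this claim; it records it as a well-known fact and points to the discussion preceding Theorem~5.1 in the cited Haemers reference, so there is no in-paper proof to compare against. Your argument supplies a correct, self-contained proof and is the standard one. The sign-flip involution $(u,v)\mapsto(u,-v)$ is rightly read as a linear isomorphism between the $\lambda$- and $(-\lambda)$-eigenspaces of $A$, which gives the multiset symmetry of the spectrum and hence $\lambda_i=-\lambda_{n-i+1}$ after sorting. Each nonzero singular value $\sigma$, via $Tv=\sigma u$ and $T^*u=\sigma v$, yields the unit eigenvectors $\frac{1}{\sqrt{2}}(u,\pm v)^{\top}$ of $A$ with eigenvalues $\pm\sigma$; orthonormality of these $2r$ vectors, both within a repeated $\sigma$ and across distinct ones, follows from orthonormality of the left and right singular vector systems, and the block-row subadditivity bound $\mathrm{rank}(A)\le \mathrm{rank}(T)+\mathrm{rank}(T^{*})=2r$ shows no further nonzero eigenvalues exist. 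Together with the paper's earlier observation that the nonzero singular values of $T$ are the square roots of the common nonzero eigenvalues of $TT^{*}$ and $T^{*}T$, this establishes the full claim; the caveat you flag about working at the level of eigenspaces rather than individual eigenvectors is exactly the right point of care.
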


Therefore, deriving the SVD of $T$ is equivalent to obtaining the spectral decomposition of $A$.
\end{remark}


\vspace{0.4cm}
\noindent\textbf{Group Algebra.} For the abelian group $\mathbb{F}^k$, its group algebra over $\mathbb{C}$, denoted by $\mathbb{C}[\mathbb{F}^k]$, consists of all possible formal sums of the variables $x_g, g \in \mathbb{F}^k$, over $\mathbb{C}$. That is, 
\[\mathbb{C}[\mathbb{F}^k] = \left\{ \sum_{g \in \mathbb{F}^k} a_g x_g : a_g \in \mathbb{C} \right\}.\] 
Addition in this algebra is defined component-wise. For elements $\sum_{g \in \mathbb{F}^k} a_g x_g$ and  $\sum_{g \in \mathbb{F}^k} b_g x_g$  in  $\mathbb{C}[\mathbb{F}^k]$, we have 
\[\sum_{g \in \mathbb{F}^k} a_g x_g + \sum_{g \in \mathbb{F}^k} b_g x_g := \sum_{g \in \mathbb{F}^k} (a_g + b_g) x_g,\]
while multiplication is defined via the addition in the group $\mathbb{F}^k$:
\[\sum_{g \in \mathbb{F}^k} a_g x_g \cdot \sum_{g \in \mathbb{F}^k} b_g x_g := \sum_{g \in \mathbb{F}^k} \left( \sum_{\tilde{g} \in \mathbb{F}^k} a_{g-\tilde{g}} b_{\tilde{g}} \right) x_g.\]

Each element $f = \sum_{g \in \mathbb{F}^k} a_g x_g \in \mathbb{C}[\mathbb{F}^k]$ can be viewed as a function $f: \mathbb{F}^k \rightarrow \mathbb{C}$ (or as a vector of length $|\mathbb{F}^k|$, indexed by the elements of $\mathbb{F}^k$), by simply setting $f(g) = a_g$ for any $g \in \mathbb{F}^k$. Thus, the group algebra is a vector space of dimension $|\mathbb{F}^k|$ over $\mathbb{C}$, and the elements $x_g,g\in \mathbb{F}^k$ is a basis of $\mathbb{C}[\mathbb{F}^k]$ called the standard basis. Another important basis of the group algebra is its   set of characters $\chi_v,v\in \mathbb{F}^k$, which in fact forms an orthogonal basis. Indeed,  there are $|\mathbb{F}^k|$ characters, one for each $v \in \mathbb{F}^k$, and they are linearly independent as they are orthogonal under the inner product defined as 
\[ \langle f, g \rangle = \sum_{x \in \mathbb{F}^k} f(x) \cdot \overline{g(x)}, \]
for any $f, g \in \mathbb{C}[\mathbb{F}^k]$. While it is customary to define a normalized inner product on $\mathbb{C}[\mathbb{F}^k]$ where each character has norm one, we find it more convenient for our purposes to use the standard inner product.

\vspace{0.4cm}
\noindent\textbf{Multiplication Operators of the Group Algebra.} Given an element $z \in \mathbb{C}[\mathbb{F}^k]$ of the group algebra, one can define the linear operator $\phi_z: \mathbb{C}[\mathbb{F}^k] \rightarrow \mathbb{C}[\mathbb{F}^k]$ by multiplication by $z$, i.e., for any element $f \in \mathbb{C}[\mathbb{F}^k]$,
\[\phi_z(f) := z \cdot f = f \cdot z.\]
The second equality holds since the group $\mathbb{F}^k$ is commutative. The following proposition, a well-known result, shows that the characters of the multiplication operator form a basis of eigenvectors. For completeness, we include its short proof.

\begin{prop}
\label{char-eigenvector}
Let $z \in \mathbb{C}[\mathbb{F}^k]$. Then, any character $\chi \in \mathbb{C}[\mathbb{F}^k]$ is an eigenvector of $\phi_z$ with eigenvalue $\langle z, \chi \rangle$.
\end{prop}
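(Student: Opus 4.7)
The plan is a one-shot direct computation: expand $\phi_z(\chi) = z \cdot \chi$ in the standard basis, collapse the double sum using the homomorphism property $\chi(g+h)=\chi(g)\chi(h)$, and then recognize the resulting scalar as $\langle z,\chi\rangle$.

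More concretely, I would write $z = \sum_{g \in \mathbb{F}^k} a_g\, x_g$ and, viewing the character $\chi:\mathbb{F}^k\to\mathbb{C}^*$ as the group-algebra element $\chi = \sum_{h \in \mathbb{F}^k} \chi(h)\, x_h$, apply the convolution formula from the preliminaries:
\[
z \cdot \chi \;=\; \sum_{t \in \mathbb{F}^k}\Bigl(\sum_{g \in \mathbb{F}^k} a_g\, \chi(t-g)\Bigr)\, x_t.
\]
Since $\chi$ is a homomorphism and takes values on the unit circle, $\chi(t-g) = \chi(t)\chi(-g) = \chi(t)\overline{\chi(g)}$. Pulling $\chi(t)$ out of the inner sum gives
\[
z \cdot \chi \;=\; \Bigl(\sum_{g \in \mathbb{F}^k} a_g\, \overline{\chi(g)}\Bigr)\sum_{t \in \mathbb{F}^k} \chi(t)\, x_t \;=\; \Bigl(\sum_{g \in \mathbb{F}^k} z(g)\,\overline{\chi(g)}\Bigr)\cdot \chi \;=\; \langle z, \chi\rangle\, \chi,
\]
which is exactly the claimed eigenvalue equation.

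There is no real obstacle here; the only subtlety worth flagging is the consistent identification between an element of $\mathbb{C}[\mathbb{F}^k]$ and its coefficient function $\mathbb{F}^k\to\mathbb{C}$, so that $z(g)=a_g$ and $\chi$ simultaneously denotes the homomorphism and the algebra element $\sum_h \chi(h)x_h$. Once that convention is fixed, the conjugate appearing in $\langle z,\chi\rangle = \sum_g z(g)\overline{\chi(g)}$ matches precisely the $\chi(-g)=\overline{\chi(g)}$ that emerges from the convolution, which is why the scalar comes out as $\langle z,\chi\rangle$ rather than as some twisted variant. Thus the proposition follows from the definitions plus the homomorphism property, with no additional machinery needed.
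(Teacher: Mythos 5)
Your proof is correct and follows essentially the same route as the paper's: expand $z\cdot\chi$ via the convolution formula, use $\chi(t-g)=\chi(t)\overline{\chi(g)}$ to factor out $\chi(t)$, and recognize the remaining scalar $\sum_g z(g)\overline{\chi(g)}$ as $\langle z,\chi\rangle$. The only difference is cosmetic relabeling of summation indices.
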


\begin{proof}
The result follows from the following calculation:
\begin{align*}
\phi_z(\chi) &= z \cdot \chi = \left( \sum_{g \in \mathbb{F}^k} z_g x_g \right) \left( \sum_{\tilde{g} \in \mathbb{F}^k} \chi(\tilde{g}) x_{\tilde{g}} \right) \\
&= \sum_{g \in \mathbb{F}^k} \sum_{\tilde{g} \in \mathbb{F}^k} z_{\tilde{g}} \chi(g - \tilde{g}) x_g \\
&= \sum_{g \in \mathbb{F}^k} \sum_{\tilde{g} \in \mathbb{F}^k} z_{\tilde{g}} \overline{\chi(\tilde{g})} \chi(g) x_g \\
&= \sum_{g \in \mathbb{F}^k} \langle z, \chi \rangle \chi(g) x_g = \langle z, \chi \rangle \cdot \chi,
\end{align*}
completing the proof.
\end{proof}

\section{Right Singular Vectors of the Incidence Matrix}
\label{right-singular-vectors}
In this section, we give a complete description of the right singular vectors of $T$ by considering the square matrix of order $Q^2$  
$$A:=T^*T=(V\Sigma^* U^*)(U\Sigma V^*)=VD^2V^*.$$
Note that the rows and columns of $A$ are indexed by the $Q^2$ points of the 
plane.  By the definition of $T$ one can easily verify that for any two points 
$(x,y),(w,z)\in \mathbb{F}^2$
$$A_{(x,y),(w,z)}=\begin{cases}
Q^{k-1} & (x,y)=(w,z)\\
Q^{k-2} & x\neq w \\
0 & x =w \text{ and } y\neq z. 
\end{cases}$$
The following lemma shows that the matrix $A$ can be viewed as a representation of some multiplication operator under the standard basis of $\mathbb{C}[\mathbb{F}^2]$.

\begin{lemma}
Let 
$$f(x)=\sum_{(v_1,v_2)\in \mathbb{F}^2}a_{(v_1,v_2)}x_{(v_1,v_2)}\in \mathbb{C}[\mathbb{F}^2],\text{ where } a_{(v_1,v_2)}=\begin{cases}
Q^{k-1} & (v_1,v_2)=(0,0)\\
Q^{k-2} & v_1\neq 0\\
0 & v_1=0,v_2\neq 0.
\end{cases}$$
Then, $A$ is the representation of $\phi_f$ under the standard basis $B=\{x_v:v\in \mathbb{F}^2\}$, i.e., $[\phi_f]_B^B=A$.
\end{lemma}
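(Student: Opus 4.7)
The plan is to verify directly that the matrix of the multiplication operator $\phi_f$ with respect to the standard basis $B=\{x_v:v\in\mathbb{F}^2\}$ agrees with $A$ entry-by-entry. Since $[\phi_f]_B^B$ has, as its column indexed by a basis vector $x_{(w,z)}$, the coordinates of $\phi_f(x_{(w,z)})$ in the basis $B$, the computation reduces to expanding $f\cdot x_{(w,z)}$ and reading off coefficients.

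First I would compute, using the multiplication rule in $\mathbb{C}[\mathbb{F}^2]$ and the fact that $x_{(v_1,v_2)}\cdot x_{(w,z)}=x_{(v_1+w,v_2+z)}$,
\[
\phi_f(x_{(w,z)}) \;=\; f\cdot x_{(w,z)} \;=\; \sum_{(v_1,v_2)\in\mathbb{F}^2} a_{(v_1,v_2)}\, x_{(v_1+w,\,v_2+z)}.
\]
Reindexing with $(x,y)=(v_1+w,v_2+z)$, equivalently $(v_1,v_2)=(x-w,y-z)$, this becomes
\[
\phi_f(x_{(w,z)}) \;=\; \sum_{(x,y)\in\mathbb{F}^2} a_{(x-w,\,y-z)}\, x_{(x,y)},
\]
so the $(x,y),(w,z)$ entry of $[\phi_f]_B^B$ is exactly $a_{(x-w,\,y-z)}$.

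Next I would check the three cases defining $a$ against the three cases defining $A$. If $(x,y)=(w,z)$ then $(x-w,y-z)=(0,0)$, giving $a_{(0,0)}=Q^{k-1}=A_{(x,y),(w,z)}$. If $x\neq w$ then the first coordinate of $(x-w,y-z)$ is nonzero, giving $a=Q^{k-2}=A_{(x,y),(w,z)}$. Finally, if $x=w$ but $y\neq z$, then $(x-w,y-z)=(0,y-z)$ with $y-z\neq 0$, giving $a=0=A_{(x,y),(w,z)}$. The three cases exactly match the piecewise definition of $A$, which completes the proof.

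There is no real obstacle here; the statement is a reformulation lemma, and the only thing to be careful about is using the correct convention for the matrix of a linear map (columns are images of basis vectors) and matching the coefficient index $(x-w,y-z)$ with the three branches of $a$. The content of the lemma is precisely that the convolution-by-$f$ structure on $\mathbb{C}[\mathbb{F}^2]$ encodes the row structure of $T^*T$, which is what allows the subsequent application of Proposition \ref{char-eigenvector} to diagonalize $A$ via characters.
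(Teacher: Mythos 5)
Your proof is correct and follows essentially the same route as the paper: expand $f\cdot x_{(w,z)}$ via the convolution product, reindex so the coefficient of $x_{(x,y)}$ is $a_{(x-w,y-z)}$, and identify this with $A_{(x,y),(w,z)}$. The only difference is that you spell out the three-case check that $a_{(x-w,y-z)}=A_{(x,y),(w,z)}$, which the paper leaves implicit; that is a harmless (and arguably clarifying) addition.
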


\begin{proof}
Let $x_{(\gamma,\delta)}\in B$ be a standard basis vector, then 
by definition 
\begin{align*}
\phi_f(x_{(\gamma,\delta)})&=f(x)\cdot x_{(\gamma,\delta)}=
\sum_{(\alpha,\beta)\in \mathbb{F}^2}a_{(\alpha,\beta)}x_{(\alpha+\gamma,\beta+\delta)}\\
&=
\sum_{(\alpha,\beta)\in \mathbb{F}^2}a_{(\alpha-\gamma,\beta-\delta)}x_{(\alpha,\beta)}
= \sum_{(\alpha,\beta)\in \mathbb{F}^2}A_{(\alpha,\beta),(\gamma,\delta)}x_{(\alpha,\beta)},
\end{align*} 
and the result follows. 
\end{proof}
By  \cref{char-eigenvector}, we conclude that the characters of $\mathbb{F}^2$ form a basis of eigenvectors. We proceed by calculating their  eigenvalue. 
\begin{prop}
\label{prop.3.2}
A character $\chi_{(u_1,u_2)}\in \mathbb{C}[\mathbb{F}^2]$ of $\mathbb{F}^2$ is an eigenvector of $\phi_f$ with eigenvalue 
$$\lambda= \begin{cases}
Q^k & (u_1,u_2)=0,\\
Q^{k-1} & u_2\neq 0,\\
0 & \text{else.}
\end{cases}$$
\end{prop}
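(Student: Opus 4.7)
The plan is to invoke \Cref{char-eigenvector} directly: the eigenvalue attached to the character $\chi_{(u_1,u_2)}$ under $\phi_f$ is simply the inner product $\langle f,\chi_{(u_1,u_2)}\rangle=\sum_{(v_1,v_2)\in\mathbb{F}^2}a_{(v_1,v_2)}\overline{\chi_{(u_1,u_2)}(v_1,v_2)}$, so the task reduces to computing this sum for each of the three parameter regimes listed in the proposition.

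Using the definition of the coefficients $a_{(v_1,v_2)}$, I would split the sum as $Q^{k-1}$ (the $(0,0)$ term) plus $Q^{k-2}\sum_{v_1\neq 0,\,v_2\in\mathbb{F}}\overline{\chi_{(u_1,u_2)}(v_1,v_2)}$, since the coefficients on the axis $v_1=0,\,v_2\neq 0$ vanish. Rewriting the restricted sum as the full sum over $\mathbb{F}^2$ minus the contribution from $v_1=0$ gives
\[
\langle f,\chi_{(u_1,u_2)}\rangle=Q^{k-1}+Q^{k-2}\Bigl(S(u_1)S(u_2)-S(u_2)\Bigr),
\]
where $S(u):=\sum_{v\in\mathbb{F}}e(-\tr(uv))$. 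By standard character-sum orthogonality for the additive group $\mathbb{F}$, $S(u)=Q$ when $u=0$ and $S(u)=0$ otherwise (here one uses that the trace $\tr:\mathbb{F}\to\mathbb{F}_q$ is a surjective $\mathbb{F}_q$-linear map, so $v\mapsto e(-\tr(uv))$ is a nontrivial additive character whenever $u\neq 0$).

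The result then follows by a short case analysis on $(u_1,u_2)$: if $(u_1,u_2)=(0,0)$ then $S(u_1)S(u_2)-S(u_2)=Q^2-Q$, yielding $Q^{k-1}+Q^{k-2}(Q^2-Q)=Q^k$; if $u_2\neq 0$ then both $S(u_2)$ and $S(u_1)S(u_2)$ vanish, yielding $Q^{k-1}$; and if $u_2=0$ but $u_1\neq 0$ then $S(u_1)S(u_2)-S(u_2)=0-Q=-Q$, yielding $Q^{k-1}-Q^{k-1}=0$. This matches the three branches in the statement.

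No step is really hard; the only thing one must be careful about is the minus sign and the distinction between $v_1=0$ and $u_2=0$ when collapsing the axis contribution. I would just keep the two orthogonality sums $S(u_1)$ and $S(u_2)$ separate and let the case split do the bookkeeping.
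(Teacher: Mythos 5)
Your proof is correct and takes essentially the same approach as the paper: both invoke \Cref{char-eigenvector} to reduce the eigenvalue to the inner product $\langle f,\chi_{(u_1,u_2)}\rangle$, plug in the definition of the coefficients $a_{(v_1,v_2)}$, and finish with a three-way case split. The only cosmetic difference is that you factor the double sum into a product of two one-dimensional character sums $S(u_1)S(u_2)-S(u_2)$ and do the case analysis once at the end, whereas the paper keeps the nested sum and evaluates it separately in each case (working with $\overline{\lambda}$ rather than $\lambda$, which is immaterial since the eigenvalues are real).
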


\begin{proof}
By  \Cref{char-eigenvector} $\lambda=<f,\chi_{(u_1,u_2)}>$. Then

\begin{align}
\label{good6}
\overline{\lambda}= \overline{<f,\chi_{(u_1,u_2)}>}&=\sum_{(\alpha,\beta)\in \mathbb{F}^2} f(\alpha,\beta)\cdot\chi_{(u_1,u_2)}(\alpha,\beta)\nonumber\\
&=Q^{k-1}+Q^{k-2}\sum_{\substack{(\alpha,\beta)\in \mathbb{F}^2\\ \alpha\neq 0}}e(\tr(\alpha u_1 + \beta u_2))
\end{align}
Clearly, if $u_1,u_2=0$ then \eqref{good6}
 becomes 
 $$Q^{k-1}+Q^{k-2}\sum_{\substack{(\alpha,\beta)\in \mathbb{F}^2\\ \alpha\neq 0}}1=Q^k,$$ as needed. 
 Next, if $u_2\neq 0$ then \eqref{good6} becomes 
$$Q^{k-1}+Q^{k-2}\sum_{\alpha\neq 0 }\sum_{\beta\in \mathbb{F}}e(\tr(\alpha u_1+\beta u_2 ))=Q^{k-1},$$
 as the inner sum vanishes when $\beta$ runs over all the field elements. Lastly, if $u_1\neq 0$ but $u_2=0$, then \eqref{good6} becomes 
 
$$Q^{k-1}+Q^{k-2}\sum_{\beta\in \mathbb{F}}\sum_{\alpha\neq 0}e(\tr(\alpha u_1)=Q^{k-1}+Q^{k-2}\sum_{\beta\in \mathbb{F}}-1=0.$$
\end{proof}
The following theorem  follows easily from \Cref{prop.3.2}.
\begin{thm}
\label{eigenvalues2}
The characters  of $\mathbb{F}^2$ are eigenvectors of the operator $\phi_f$ with the following eigenvalues: 
\begin{center}
\begin{tabular}{ |c |c| c|  }
\hline
 \text{Character $\chi_{(u_1,u_2)}$} & \text{No. of  characters} & \text{Eigenvalue} \\ 

  \hline
 $(u_1,u_2)=0$  & $1$& $Q^{k}$  \\  
\hline  $(u_1,u_2), u_2\neq 0$  & $Q(Q-1)$&$Q^{k-1}$\\
  \hline $(u_1,0), u_1\neq 0$ &  $Q-1$&$0$ \\
   \hline
\end{tabular}
\end{center}
\end{thm}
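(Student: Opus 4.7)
The plan is to treat this theorem as a direct bookkeeping consequence of \Cref{prop.3.2}. Proposition 3.2 already computes, case by case, the eigenvalue attached to each character $\chi_{(u_1,u_2)}$, with the three cases being $(u_1,u_2)=0$, $u_2\neq 0$, and the ``else'' case which (given the first two) forces $u_1\neq 0$ and $u_2=0$. Once one rewrites the trichotomy in this explicit form, the theorem reduces to counting how many $(u_1,u_2)\in\mathbb{F}^2$ satisfy each condition and to noting that every character arises from some such pair.

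First I would recall from the preliminaries that the family $\{\chi_v : v\in\mathbb{F}^2\}$ is a basis of $\mathbb{C}[\mathbb{F}^2]$, so every character is accounted for by choosing some $(u_1,u_2)\in\mathbb{F}^2$. Then I would partition $\mathbb{F}^2$ into the three disjoint sets dictated by \Cref{prop.3.2}: the singleton $\{(0,0)\}$ of size $1$; the set $\{(u_1,u_2): u_2\neq 0\}$, which has $Q$ choices for $u_1$ and $Q-1$ choices for $u_2$, giving $Q(Q-1)$; and the set $\{(u_1,0): u_1\neq 0\}$ of size $Q-1$. Applying \Cref{prop.3.2} to each class immediately fills in the corresponding row of the table with the eigenvalues $Q^k$, $Q^{k-1}$, and $0$ respectively.

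As a sanity check I would verify that the sizes sum to $1+Q(Q-1)+(Q-1)=Q^2$, matching the total number of characters of $\mathbb{F}^2$ and the dimension of $\mathbb{C}[\mathbb{F}^2]$, so that the listed characters form a full basis of eigenvectors of $\phi_f$ (equivalently, of $A=T^*T$) with the stated multiplicities. There is no real obstacle here: the content has already been carried out in \Cref{prop.3.2}, and the theorem is essentially a tabulation step. If one wanted a slightly more polished presentation, the only mild care needed is to emphasize that the ``else'' clause of \Cref{prop.3.2} is precisely $u_1\neq 0$, $u_2=0$, so that the three cases are genuinely disjoint and exhaustive.
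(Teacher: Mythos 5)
Your proposal is correct and matches the paper exactly: the paper states that \Cref{eigenvalues2} ``follows easily from \Cref{prop.3.2}'' without further elaboration, and your proof simply makes the implicit counting explicit (the partition of $\mathbb{F}^2$ into the three cases of sizes $1$, $Q(Q-1)$, and $Q-1$, together with the fact that the characters form a basis). Nothing is missing and no different route is taken.
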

We immediately get the following corollary. 
\begin{cor}
\label{cor-right-singular-vectors}
The normalized characters of $\mathbb{F}^2$ serve as right-singular vectors of the matrix $T$, each associated with a singular value equal to the square root of their corresponding eigenvalue, as detailed in Theorem \ref{eigenvalues2}.
\end{cor}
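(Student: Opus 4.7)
The plan is to read off the corollary from the spectral identity $A=T^{*}T = VD^{2}V^{*}$ furnished by the SVD, combined with the eigen-analysis of $A$ that the preceding lemma and \Cref{eigenvalues2} already supply. First I would recall that the lemma just before \Cref{prop.3.2} identifies the matrix $A=T^{*}T$ with $[\phi_{f}]_{B}^{B}$, the matrix of the multiplication operator $\phi_{f}$ in the standard basis $B=\{x_{v}:v\in\mathbb{F}^{2}\}$. Consequently, if an element $\chi\in\mathbb{C}[\mathbb{F}^{2}]$ is an eigenvector of $\phi_{f}$ with eigenvalue $\lambda$, then its coordinate vector with respect to $B$ is an honest eigenvector of the matrix $A$ with the same eigenvalue.

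Next I would invoke \Cref{eigenvalues2} to conclude that the $Q^{2}$ characters $\chi_{(u_{1},u_{2})}$ of $\mathbb{F}^{2}$, regarded as vectors of $\mathbb{C}^{Q^{2}}$, are eigenvectors of $A$ with the tabulated eigenvalues. Since characters associated to distinct group elements are orthogonal under the standard inner product used in the preliminaries, they form an orthogonal basis of $\mathbb{C}^{Q^{2}}$; normalizing each character to unit length therefore yields an orthonormal eigenbasis of the Hermitian matrix $A$. Assembling these normalized characters as the columns of a unitary matrix $V$ and collecting the corresponding eigenvalues into a diagonal matrix $D^{2}$ produces the factorization $T^{*}T = A = VD^{2}V^{*}$.

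Finally I would compare this factorization with the generic SVD identity $T^{*}T = V\Sigma^{*}\Sigma V^{*} = VD^{2}V^{*}$ recorded in the preliminaries: the columns of $V$ are precisely a choice of right-singular vectors of $T$, and the singular values along the diagonal of $D$ are the nonnegative square roots of the corresponding eigenvalues of $T^{*}T$. This yields the claim. I do not anticipate any real obstacle; the only bookkeeping point is that in the $Q(Q-1)$-dimensional eigenspace associated with eigenvalue $Q^{k-1}$ one must use an \emph{orthonormal} basis to legitimately serve as singular vectors, and this is exactly what orthogonality of distinct characters guarantees after normalization.
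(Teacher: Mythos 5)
Your argument is correct and is exactly the route the paper intends; the paper treats this corollary as an immediate consequence of \Cref{eigenvalues2} together with the SVD discussion in the preliminaries, and does not spell it out. You correctly supply the two implicit ingredients: that the characters are orthogonal (hence an orthonormal eigenbasis of $T^*T$ after normalization, even within the $Q(Q-1)$-dimensional eigenspace), and that any orthonormal eigenbasis of $T^*T$ is by definition a valid set of right-singular vectors with singular values equal to the square roots of the eigenvalues.
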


Building on \Cref{eigenvalues2}, we next establish a bound on the squared norm of the projection of an indicator vector onto the eigenspaces. This bound, presented in the following lemma, will be instrumental in obtaining the results of  Section \ref{applicatoins}.

\begin{lemma}
\label{decomposition2} 
Let $1_P\in \mathbb{C}^{Q^2}$ be the indicator vector of a set $P\subseteq \mathbb{F}^2$ of $p$ points in the plane. Then, the squared norm of the projection of $1_P$ onto  the eigenspaces with eigenvalues $Q^k$ and $Q^{k-1}$ is $\frac{p^2}{Q^2}$ and at most $\frac{(Q-1)p}{Q}$, respectively.  
\end{lemma}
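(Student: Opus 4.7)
The plan is to expand $1_P$ in the orthogonal character basis of $\mathbb{C}[\mathbb{F}^2]$ from \Cref{eigenvalues2}. Each character $\chi_v$ has squared norm $Q^2$ (since $|\chi_v(x)|=1$ pointwise), so the projection of $1_P$ onto the line spanned by $\chi_v$ has squared norm $|\langle 1_P,\chi_v\rangle|^2/Q^2$, and Parseval gives
\[
\sum_{v \in \mathbb{F}^2} \frac{|\langle 1_P,\chi_v\rangle|^2}{Q^2} = \|1_P\|^2 = p.
\]

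For the one-dimensional $Q^k$-eigenspace, spanned by the trivial character $\chi_{(0,0)}\equiv 1$, one has $\langle 1_P,\chi_{(0,0)}\rangle = p$, yielding projection squared norm $p^2/Q^2$, which is the first claim. For the $Q^{k-1}$-eigenspace I would compute the projection onto the $0$-eigenspace and obtain the $Q^{k-1}$-eigenspace projection by subtraction from $p$. The $0$-eigenspace is spanned by the characters $\chi_{(u_1,0)}$ with $u_1 \neq 0$, each of which depends only on the first coordinate; writing $n_x$ for the number of points of $P$ with first coordinate $x$, one has $\langle 1_P,\chi_{(u_1,0)}\rangle = \sum_x n_x\, e(-\tr(u_1 x))$. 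Orthogonality of additive characters of $\mathbb{F}$ then yields $\sum_{u_1 \in \mathbb{F}}|\langle 1_P,\chi_{(u_1,0)}\rangle|^2 = Q\sum_x n_x^2$; removing the $u_1=0$ contribution $p^2$ and dividing by $Q^2$ shows that the projection onto the $0$-eigenspace has squared norm $\sum_x n_x^2/Q - p^2/Q^2$.

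Combining the three projections and using that the three eigenspaces are mutually orthogonal and together span $\mathbb{C}^{Q^2}$, the projection onto the $Q^{k-1}$-eigenspace has squared norm exactly $p - \sum_x n_x^2/Q$. Since the $n_x$ are non-negative integers with $\sum_x n_x = p$, the inequality $n_x^2 \geq n_x$ gives $\sum_x n_x^2 \geq p$, and so the projection is at most $p - p/Q = (Q-1)p/Q$, as claimed. No step is expected to be a serious obstacle; the essential observations are the Parseval bookkeeping over the three eigenspaces and the integer inequality $\sum_x n_x^2 \geq p$, which is exactly what turns the naive bound $p - p^2/Q^2$ into the sharper $(Q-1)p/Q$.
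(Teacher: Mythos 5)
Your proof is correct, and it reaches the bound by a route that is organized a bit differently from the paper's. The paper computes the squared norm of the projection onto the $Q^{k-1}$-eigenspace directly: it expands the character sum over $u_2\neq 0$ as a double sum over pairs of points, splits off the diagonal (which contributes $(Q-1)p/Q$), and then argues the off-diagonal contribution is non-positive by the standard ``sum over $u_1$ vanishes unless $\alpha=\alpha'$, then sum over $u_2\neq 0$ gives $-1$'' casework. You instead compute the projection onto the $0$-eigenspace (the characters $\chi_{(u_1,0)}$, $u_1\neq 0$, which depend only on the first coordinate), get $\sum_x n_x^2/Q - p^2/Q^2$ exactly, and then obtain the $Q^{k-1}$-projection by Parseval, yielding the clean closed form $p - \sum_x n_x^2/Q$ before invoking $n_x^2\geq n_x$. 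The two computations are algebraically the same fact --- the paper's ``off-diagonal sum is non-positive'' is precisely your $\sum_x n_x^2\geq p$ --- but your subtraction route makes the exact value of the relevant projection visible (and also shows the bound is tight exactly when $P$ has at most one point on each vertical line), whereas the paper's direct expansion stays one step closer to the standard expander-mixing calculation and generalizes more mechanically to the analogous \Cref{tikitaka} on the polynomial side, where no ``subtract the kernel'' shortcut is as clean.
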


\begin{proof}
The squared norm of the projection  of $1_P$ onto the normalized trivial character is 
\[ \left|\left\langle 1_P, \frac{\chi_0}{Q} \right\rangle\right|^2 = \frac{p^2}{Q^2}. \]
Next, we bound the squared norm of the projection of $1_P$ onto the eigenspace corresponding to the eigenvalue $Q^{k-1}$, which is calculated as follows:
\begin{align*}
&\sum_{\substack{u_1, u_2 \in \mathbb{F} \\ u_2 \neq 0}} \left| \left\langle 1_P, \frac{\chi_{(u_1, u_2)}}{Q} \right\rangle \right|^2 \\
&= Q^{-2} \sum_{\substack{u_1, u_2 \in \mathbb{F} \\ u_2 \neq 0}} \sum_{(\alpha, \beta) \in P} \chi_{(u_1, u_2)}(\alpha, \beta) \sum_{(\alpha', \beta') \in P} \overline{\chi_{(u_1, u_2)}(\alpha', \beta')} \\
&= Q^{-2} \sum_{\substack{u_1, u_2 \in \mathbb{F} \\ u_2 \neq 0}} \sum_{(\alpha, \beta), (\alpha', \beta') \in P} \chi_{(u_1, u_2)}(\alpha - \alpha', \beta - \beta') \\
&= \frac{(Q-1)p}{Q} + Q^{-2} \sum_{\substack{(\alpha, \beta), (\alpha', \beta') \in P \\ (\alpha, \beta) \neq (\alpha', \beta')}} \sum_{\substack{u_1, u_2 \in \mathbb{F} \\ u_2 \neq 0}} e(\tr(u_1(\alpha - \alpha') + u_2(\beta - \beta'))) \\
&\leq \frac{(Q-1)p}{Q},
\end{align*}
where the inequality follows from the fact that if $\alpha \neq \alpha'$, the inner sum vanishes as $u_1$ ranges over $\mathbb{F}$. Conversely, if $\alpha = \alpha'$, then $\beta$ must necessarily differ from $\beta'$, resulting in the inner sum equalling $-Q$. This is because, for a fixed $u_1$, the sum is $-1$ as $u_2$ varies over $\mathbb{F} \setminus \{0\}$. In both scenarios, the double sum is non-positive, leading to the stated bound.
\end{proof}

\section{Left Singular Vectors of the Incidence Matrix}
\label{left-singular-vectors}
In this section we take a similar approach to Section \ref{right-singular-vectors} and provide a complete description of the left singular vectors of $T$ by considering the square matrix  
$A = T \cdot T^*$ of order $Q^k$. Notice that  the rows and columns of $A$ can be  indexed by polynomials in $\mathbb{F}^k[x]$, and  it can be easily verified that for two polynomials $f, g \in \mathbb{F}^k[x]$, the entry $A_{f,g}$ is the number of points $\alpha \in \mathbb{F}$ where $f$ and $g$ agree, i.e.,
\[ A_{f,g} = |\{\alpha \in \mathbb{F} : f(\alpha) = g(\alpha)\}|. \]
Consequently, the diagonal entries of $A$ are equal to $Q$, that is, $A_{f,f} = Q$ for any $f \in \mathbb{F}^k[x]$.

Since the groups \(\mathbb{F}^k\) and \(\mathbb{F}^k[x]\) are isomorphic, their corresponding group algebras \(\mathbb{C}[\mathbb{F}^k]\) and \(\mathbb{C}[\mathbb{F}^k[x]]\) are also isomorphic. Consequently, in the sequel, we will refer to them interchangeably as the same mathematical object.
 
Consider $z \in \mathbb{C}[\mathbb{F}^k[x]]$, a group algebra element that records the number of zeros each polynomial $g \in \mathbb{F}^k[x]$ has over the field $\mathbb{F}$. Specifically, let $z = \sum_{g \in \mathbb{F}^k[x]} z_g x_g$ where 
\[ z_g = |\{\alpha \in \mathbb{F} : g(\alpha) = 0\}|. \]
The following lemma shows that the matrix $A$ represents the multiplicatoin operator $\phi_z$ under the standard basis of the group algebra.

\begin{lemma}
The matrix $A=T\cdot T^*$ is  a representation of the operator $\phi_z$ under  the standard basis $B = \{x_g : g \in \mathbb{F}^k\}$, i.e.,
\[ [\phi_z]_B^B = A. \]
\end{lemma}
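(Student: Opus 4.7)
The plan is to directly compute the action of $\phi_z$ on an arbitrary standard basis vector $x_g$ and verify that the coefficient of $x_f$ in $\phi_z(x_g)$ matches the entry $A_{f,g}$. This is the same style of check that was used earlier in the proof that $A = T^*T$ is the representation of $\phi_f$, and it reduces to tracking what multiplication in the group algebra does to a basis element.

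First I would unfold the definition: by the product rule in $\mathbb{C}[\mathbb{F}^k[x]]$, and since $x_h \cdot x_g = x_{h+g}$ (the group operation on $\mathbb{F}^k[x]$ being polynomial addition), one has
\[
\phi_z(x_g) \;=\; z \cdot x_g \;=\; \Bigl(\sum_{h \in \mathbb{F}^k[x]} z_h\, x_h \Bigr) \cdot x_g \;=\; \sum_{h \in \mathbb{F}^k[x]} z_h\, x_{h+g}.
\]
Reindexing by $f = h+g$ (so $h = f - g$) gives
\[
\phi_z(x_g) \;=\; \sum_{f \in \mathbb{F}^k[x]} z_{f-g}\, x_f.
\]

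Next I would identify the coefficient $z_{f-g}$ with the matrix entry $A_{f,g}$. By the definition of $z$, the quantity $z_{f-g}$ counts the number of $\alpha \in \mathbb{F}$ with $(f-g)(\alpha) = 0$, i.e.\ the number of $\alpha \in \mathbb{F}$ where $f(\alpha) = g(\alpha)$. But this is exactly the description of $A_{f,g}$ given just before the lemma. Hence
\[
\phi_z(x_g) \;=\; \sum_{f \in \mathbb{F}^k[x]} A_{f,g}\, x_f,
\]
which is precisely the $g$-th column of $A$ expressed in the standard basis $B$. Since this holds for every $g \in \mathbb{F}^k[x]$, we conclude $[\phi_z]_B^B = A$.

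There is no real obstacle here: the argument is a one-line bookkeeping computation, the only subtle point being to remember that the group law on $\mathbb{F}^k[x]$ (viewed as an additive group isomorphic to $\mathbb{F}^k$) is polynomial addition, so that $f - g$ is itself an element of $\mathbb{F}^k[x]$ whose zero-set is what $z_{f-g}$ records.
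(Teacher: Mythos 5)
Your proof is correct and matches the paper's argument essentially line for line: expand $\phi_z(x_g)$ using the group algebra product, reindex by $f = h + g$, and identify $z_{f-g}$ with $A_{f,g}$ since both count the $\alpha \in \mathbb{F}$ where $f$ and $g$ agree. Nothing to add.
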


\begin{proof}
Let $g$ be a polynomial of degree less than $k$, then by definition 
$$\phi_z(x_g)=z\cdot x_g=\Big(\sum_{\tilde{g}\in \mathbb{F}^k} z_{\tilde{g}} x_{\tilde{g}}\Big)x_g=
\sum_{\tilde{g}\in \mathbb{F}^k}z_{\tilde{g}}x_{\tilde{g}+g}=\sum_{f\in \mathbb{F}^k}z_{f-g}x_f.$$
Therefore, the coefficient of $x_f$ represents the number of zeros that the polynomial $f-g$ has over $\mathbb{F}$. Equivalently, this is the number of points over $\mathbb{F}$ where the polynomials $f$ and $g$ agree, which corresponds precisely to the $(f,g)$ entry of the matrix $A$. Thus, the result follows.
\end{proof}
%
Having established that the matrix $A$ represents the linear operator $\phi_z$ under the standard basis, we can conclude by \Cref{char-eigenvector} that the characters of $\mathbb{F}^k$ are the eigenvectors of $A$ (where each character is represented as a vector under the standard basis). The following lemma provides a characterization of the characters that correspond to nonzero eigenvalues.

\begin{lemma}
\label{lemma3}
Let $\chi\in \mathbb{C}[\mathbb{F}^k[x]]$ be a nontrivial character of $\mathbb{F}^k[x]$. Then $\chi$ is an eigenvector of $\phi_z$ with  eigenvalue  $Q^{k-1}$ if there exists $\alpha \in \mathbb{F}$ such that for any polynomial $f$ of degree less than $k-1$,
\[ \chi(xf(x)) = \chi(\alpha f(x)). \]
Otherwise, the eigenvalue is zero.
\end{lemma}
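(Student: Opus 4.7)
The plan is to reduce the problem, via \Cref{char-eigenvector}, to computing the inner product $\langle z, \chi\rangle$, and to interpret that inner product as a double sum over zeros. Concretely, I would write
\[
\overline{\langle z,\chi\rangle} \;=\; \sum_{g\in\mathbb{F}^k[x]} z_g\,\chi(g) \;=\; \sum_{\alpha\in\mathbb{F}}\ \sum_{\substack{g\in\mathbb{F}^k[x]\\ g(\alpha)=0}}\chi(g),
\]
by rewriting $z_g$ as $|\{\alpha:g(\alpha)=0\}|$ and swapping the order of summation. For each $\alpha\in\mathbb{F}$, the set $H_\alpha=\{g\in\mathbb{F}^k[x]:g(\alpha)=0\}$ is an additive subgroup of $\mathbb{F}^k[x]$ of size $Q^{k-1}$, naturally parameterized as $H_\alpha=\{(x-\alpha)h(x):\deg h<k-1\}$.

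Next I would invoke the standard fact that the sum $\sum_{g\in H_\alpha}\chi(g)$ equals $|H_\alpha|=Q^{k-1}$ if $\chi$ restricted to $H_\alpha$ is trivial, and vanishes otherwise. Since $\chi$ is a group homomorphism from $(\mathbb{F}^k[x],+)$ to $\mathbb{C}^*$, triviality on $H_\alpha$ is equivalent to $\chi\bigl((x-\alpha)h(x)\bigr)=1$ for every $h$ with $\deg h<k-1$, which rewrites (using $\chi(u-v)=\chi(u)\overline{\chi(v)}$) as
\[
\chi(xh(x)) \;=\; \chi(\alpha h(x)) \qquad \text{for all } h\in\mathbb{F}^{k-1}[x].
\]
Thus the eigenvalue equals $Q^{k-1}$ times the number of $\alpha\in\mathbb{F}$ for which this condition holds, and the lemma reduces to showing this count is at most one when $\chi$ is nontrivial.

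The main obstacle is exactly this uniqueness of $\alpha$. To handle it I would use the explicit form $\chi=\chi_v$ for some nonzero $v=(v_0,\dots,v_{k-1})\in\mathbb{F}^k$. Unfolding $\chi_v\bigl((x-\alpha)h\bigr)=1$ and collecting coefficients of the $h_i$, as in the proof of \Cref{prop.3.2}, yields $e\!\left(\tr\!\left(\sum_{i=0}^{k-2}(v_{i+1}-\alpha v_i)h_i\right)\right)=1$ for every choice of $h_0,\dots,h_{k-2}\in\mathbb{F}$; nondegeneracy of the trace pairing then forces the linear recursion $v_{i+1}=\alpha v_i$ for $i=0,\dots,k-2$. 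If $v_0\neq 0$, this determines $\alpha=v_1/v_0$ uniquely; if $v_0=0$, the recursion collapses to $v=0$, contradicting nontriviality of $\chi$, so no $\alpha$ works. In either case at most one $\alpha$ satisfies the condition, so the eigenvalue is either $Q^{k-1}$ or $0$, exactly as claimed.
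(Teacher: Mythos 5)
Your proof is correct, and it takes a genuinely somewhat different route from the paper's. You and the paper both begin the same way: reduce via \Cref{char-eigenvector} to computing $\langle z,\chi\rangle$, rewrite it as a double sum over $\alpha\in\mathbb{F}$ and over $H_\alpha=(x-\alpha)\mathbb{F}^{k-1}[x]$, and observe each inner sum is either $0$ or $Q^{k-1}$. The divergence is in how each argument pins down the count. The paper handles the forward direction (some $\alpha$ works) by a change of variables $\beta\mapsto\alpha-\beta$ that collapses the whole outer sum to $\sum_{\gamma\in\mathbb{F}}\sum_f\chi(\gamma f)$, which evaluates to $Q^{k-1}$ without ever needing to know that $\alpha$ is unique; it then handles the reverse direction (no $\alpha$ works) by the shift-invariance trick, and only later, in \Cref{lemma2}, deduces the explicit form of the good characters by a dimension count comparing the $Q^{k-1}$-eigenspaces of $T^*T$ and $TT^*$. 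You instead invoke the abstract character-sum-over-a-subgroup fact (which the shift trick is one proof of), so the eigenvalue becomes $Q^{k-1}$ times the number of $\alpha$'s for which $\chi$ is trivial on $H_\alpha$, and you establish that this count is $0$ or $1$ by unpacking $\chi=\chi_v$ explicitly, collecting coefficients, and using nondegeneracy of the trace pairing to force $v_{i+1}=\alpha v_i$. This buys you, in one pass, both the lemma and the explicit form $v=v_0(1,\alpha,\ldots,\alpha^{k-1})$ that the paper proves separately in \Cref{lemma2}; the cost is that you need the concrete $\chi_v$ representation and trace nondegeneracy, which the paper's proof of \Cref{lemma3} manages to avoid. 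Both are sound; yours is more self-contained and explicit, the paper's is more abstract and cleanly factors the two statements apart.
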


\begin{proof}
Assume that there exists such  $\alpha\in \mathbb{F}$, then by Proposition \ref{char-eigenvector} the eigenvalue equals to the  inner product $<z,\chi>$. Then,   
\begin{align*}
\overline{<z,\chi>}&=\sum_{f\in \mathbb{F}^k[x]}z(f)\chi(f)=\sum_{f\in \mathbb{F}^k[x]}\sum_{\substack{\beta \in \mathbb{F}\\ f(\beta)=0}}\chi(f) =\sum_{\beta \in \mathbb{F}}\sum_{\substack{f\in \mathbb{F}^k[x] \\f(\beta)=0}}\chi(f)=\sum_{\beta \in \mathbb{F}}\sum_{f\in \mathbb{F}^{k-1}[x]}\chi((x-\beta)f)\\
&=\sum_{\beta \in \mathbb{F}}\sum_{f\in \mathbb{F}^{k-1}[x]}\chi(xf)\chi(-\beta f)=\sum_{\beta \in \mathbb{F}}\sum_{f\in \mathbb{F}^{k-1}[x]}\chi((\alpha-\beta) f)\\
&= \sum_{\beta \neq \alpha}\sum_{f\in \mathbb{F}^{k-1}[x]}\chi((\alpha-\beta) f)+ \sum_{f\in \mathbb{F}^{k-1}[x]}\chi(0\cdot f)=Q^{k-1},
\end{align*}
where the last equality follows since for $\alpha\neq \beta$ the inner sum vanishes, as $\chi$ is a nontrivial character. 

For the other direction assume that there is no such $\alpha$ and we need to show that the corresponding eigenvalue is zero.  Then,  for any $\alpha\in \mathbb{F}$ there exists a polynomial $f$ of degree less than $k-1$ such that $\chi(xf(x))\neq \chi(\alpha f(x))$, equivalently, $\chi((x-\alpha)f(x))\neq 1$. This implies that $\chi$ is not constant on polynomials of degree less than $k$ that have a root  at $\alpha$. For $\alpha\in \mathbb{F}$ let  $f_{\alpha}$
be  some  polynomial of degree less than $k-1$ that satisfies 
\begin{equation}
\label{good}
    \chi((x-\alpha)f_\alpha(x))\neq 1,
\end{equation}
whose existence is guaranteed by the preceding discussion. Then for $\alpha\in \mathbb{F}$ 

\begin{align}
\sum_{f\in \mathbb{F}^{k-1}[x]}\chi((x-\alpha)f)&=\sum_{f\in \mathbb{F}^{k-1}[x]}\chi((x-\alpha)(f+f_\alpha))\nonumber\\
&=\chi((x-\alpha)f_\alpha)\cdot\sum_{f\in \mathbb{F}^{k-1}[x]}\chi((x-\alpha)f)=0\label{good2},
\end{align}
where the last equality follows from  \eqref{good}. Then, the eigenvalue equals  
$$
    \overline{<z,\chi>}=\sum_{f\in \mathbb{F}^k[x]}z(f)\chi(f)
    =\sum_{f\in \mathbb{F}^{k}[x]} 
    \sum_{\substack{\alpha \in \mathbb{F}\\f(\alpha)=0}}
    \chi((x-\alpha)f)= \sum_{\alpha \in \mathbb{F}}\sum_{f\in \mathbb{F}^{k-1}[x]}\chi((x-\alpha)f)=0,
$$
where the last equality follows from \eqref{good2}, and the result follows. 
\end{proof}

\Cref{lemma3} provided a necessary and sufficient condition for a nontrivial character to have a nonzero eigenvalue. The next lemma offers a more explicit characterization of those characters, facilitating the enumeration of their quantity.

\begin{lemma}
\label{lemma2}
Let $\alpha \in \mathbb{F}$. The character $\chi_v$ satisfies 
\begin{equation}
\label{good3}
    \chi_v(\alpha \cdot f) = \chi_v(x \cdot f) \text{ for all } f \in \mathbb{F}^{k-1}[x]
\end{equation}
if and only if 
\begin{equation}
\label{eq:1}
    v = \beta \cdot (1, \alpha, \ldots, \alpha^{k-1})
\end{equation}
for some $\beta \in \mathbb{F}$.
\end{lemma}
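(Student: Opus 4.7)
The plan is to reduce the statement to a linear-algebra condition on the coordinate vector $v=(v_0,v_1,\ldots,v_{k-1})\in\mathbb{F}^k$ by expanding both sides of \eqref{good3} in the standard basis of $\mathbb{F}^k[x]$, and then to invoke non-degeneracy of the trace form to extract the recursion $v_{i+1}=\alpha v_i$ that forces \eqref{eq:1}.

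First I would rewrite \eqref{good3} using the fact that $\chi_v$ is a group homomorphism, so the condition is equivalent to $\chi_v(\alpha f - xf)=1$, i.e.\ $\tr(\langle v,\alpha f-xf\rangle)=0$ for every $f\in\mathbb{F}^{k-1}[x]$. Writing $f=\sum_{i=0}^{k-2}f_ix^i$ and identifying polynomials with coefficient vectors in $\mathbb{F}^k$, the vector $\alpha f-xf$ has $i$-th coordinate $\alpha f_i-f_{i-1}$ (with the convention $f_{-1}=f_{k-1}=0$). A direct expansion then gives
\begin{equation*}
\langle v,\alpha f-xf\rangle \;=\; \sum_{i=0}^{k-2}(\alpha v_i - v_{i+1})\,f_i,
\end{equation*}
so the condition becomes $\tr\bigl(\sum_{i=0}^{k-2}(\alpha v_i - v_{i+1}) f_i\bigr)=0$ for all $(f_0,\ldots,f_{k-2})\in \mathbb{F}^{k-1}$.

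Next, specializing to $f_j=t$ and $f_i=0$ for $i\neq j$, the requirement reduces to $\tr\bigl((\alpha v_j-v_{j+1})t\bigr)=0$ for every $t\in\mathbb{F}$ and every $j\in\{0,\ldots,k-2\}$. By non-degeneracy of the trace bilinear form $(x,y)\mapsto\tr(xy)$ on $\mathbb{F}$ over $\mathbb{F}_q$, this forces $\alpha v_j-v_{j+1}=0$ for each $j$. Solving this recursion with $\beta:=v_0$ yields $v_j=\beta\alpha^j$, which is exactly \eqref{eq:1}. Conversely, if $v=\beta(1,\alpha,\ldots,\alpha^{k-1})$, then $\alpha v_i-v_{i+1}=0$ for all $i$, so the displayed inner product vanishes identically in $f$ and \eqref{good3} holds.

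There is no serious obstacle here: the whole argument is essentially a telescoping identification of $\alpha f-xf$ in coordinates followed by non-degeneracy of $\tr$. The only point requiring a bit of care is getting the index shift in $xf$ right (the map ``multiplication by $x$'' on $\mathbb{F}^{k-1}[x]\hookrightarrow\mathbb{F}^k[x]$ is the shift $(f_0,\ldots,f_{k-2},0)\mapsto(0,f_0,\ldots,f_{k-2})$), so that the linear form in $f$ is cleanly diagonalized with coefficients $\alpha v_i-v_{i+1}$.
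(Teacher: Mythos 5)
Your proof is correct, and it takes a genuinely different route from the paper's. The paper proves only the easy implication (that $v=\beta(1,\alpha,\ldots,\alpha^{k-1})$ forces \eqref{good3}) directly, and then deduces the converse by a counting argument: by \Cref{lemma3} the characters satisfying \eqref{good3} for \emph{some} $\alpha$ are exactly those in the eigenspace of eigenvalue $Q^{k-1}$, whose dimension $Q(Q-1)$ was pinned down via \Cref{eigenvalues2} and the matching of spectra of $T^*T$ and $TT^*$; since the $Q(Q-1)$ nonzero vectors of the form \eqref{eq:1} already fill this count, they must be all of them. Your argument instead diagonalizes the linear form $f\mapsto\langle v,\alpha f-xf\rangle=\sum_{i=0}^{k-2}(\alpha v_i-v_{i+1})f_i$ and invokes non-degeneracy of $\tr$ to force the recursion $v_{i+1}=\alpha v_i$. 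This is more elementary and self-contained: it does not rely on the earlier SVD machinery, it proves both directions from the same identity, and it cleanly establishes the statement for a \emph{fixed} $\alpha$ rather than deducing it from a count over all $\alpha$ (the paper's count a priori only bounds the union over $\alpha$, and one has to additionally observe that a nontrivial character cannot satisfy \eqref{good3} for two distinct values of $\alpha$ — a small step your approach sidesteps entirely). The tradeoff is that the paper's argument reuses structure it has already built, keeping the section internally coherent, whereas yours would stand as an independent lemma. Both are valid; yours is arguably the cleaner proof of the lemma in isolation.
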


\begin{proof}
First, for $v = 0$, it is straightforward to verify that both \eqref{good3} and \eqref{eq:1} hold. Next, by \Cref{lemma3}, a character $\chi_v$ satisfying \eqref{good3} corresponds to the eigenvalue $Q^{k-1}$. Conversely, by \Cref{eigenvalues2}, the eigenspace of $T^* \cdot T$ associated with the eigenvalue $Q^{k-1}$ has dimension $Q(Q-1)$. Therefore, the eigenspace of $T \cdot T^*$ corresponding to the eigenvalue $Q^{k-1}$ also has dimension $Q(Q-1)$. Since the characters are linearly independent, we conclude that there are exactly $Q(Q-1)$ characters satisfying \eqref{good3}. We next demonstrate that these are precisely the $Q(Q-1)$ vectors of the form \eqref{eq:1} for $v \neq 0$. Indeed, there are $Q(Q-1)$ such non-zero vectors, as there are $Q-1$ options to select $\beta$ and $Q$ options to select $\alpha$. Lastly, we show that if $v$ satisfies \eqref{eq:1}, then \eqref{good3} holds.

Indeed, for $f\in \mathbb{F}^{k-1}[x]$ the left-hand side of \eqref{good3} becomes
\[ \chi_v(\alpha f) = e(\tr(\beta \cdot \alpha \cdot f(\alpha))) = e(\tr((\beta \cdot x \cdot f)(\alpha))) = \chi_v(x \cdot f), \]
as required.
\end{proof}

By combining \Cref{lemma3} and \Cref{lemma2} we get a decomposition of the space $\mathbb{C}[\mathbb{F}^k[x]]$ to eigenvectors and their eigenvalues.

\begin{thm}
\label{eigenvalues1}
The characters of $\mathbb{F}^k[x]$ are eigenvectors of the operator $\phi_z$ with the following eigenvalues:

\begin{center}
\begin{tabular}{ |c |c| c|  }
\hline
 \text{Character $\chi_v$}& \text{No. of characters} & \text{Eigenvalue}   \\ 

  \hline
 $v=0$ & $1$  & $Q^{k}$  \\  
\hline $v=\beta(1,\alpha,\ldots,\alpha^{k-1}), \beta\neq 0, \alpha\in \mathbb{F}$ & $Q(Q-1)$ & $Q^{k-1}$\\
\hline  else &$Q^k - Q(Q-1)-1$ & $0$\\
  \hline
\end{tabular}
\end{center}
\end{thm}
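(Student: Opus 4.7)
The plan is to assemble \Cref{eigenvalues1} directly from the two preceding lemmas, treating the trivial character separately since \Cref{lemma3} was stated only for nontrivial characters.

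First, I would handle the case $v = 0$ by direct computation. Since $\chi_0$ is identically $1$, the eigenvalue is
\[
\overline{\langle z, \chi_0\rangle} = \sum_{f\in \mathbb{F}^k[x]} z(f) = \sum_{f\in \mathbb{F}^k[x]}|\{\alpha\in\mathbb{F}:f(\alpha)=0\}| = \sum_{\alpha\in\mathbb{F}}|\{f\in \mathbb{F}^k[x]:f(\alpha)=0\}|.
\]
For each fixed $\alpha$, the polynomials of degree less than $k$ vanishing at $\alpha$ are exactly those of the form $(x-\alpha)g$ with $g\in\mathbb{F}^{k-1}[x]$, giving $Q^{k-1}$ such polynomials. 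Hence the eigenvalue is $Q\cdot Q^{k-1}=Q^k$, and there is exactly one such character.

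Next, for nontrivial characters $\chi_v$ with $v\neq 0$, I would invoke \Cref{lemma3}: the eigenvalue is $Q^{k-1}$ if there exists $\alpha\in\mathbb{F}$ such that $\chi_v(xf)=\chi_v(\alpha f)$ for every $f\in \mathbb{F}^{k-1}[x]$, and is $0$ otherwise. By \Cref{lemma2}, this condition holds (for some $\alpha$) if and only if $v=\beta(1,\alpha,\ldots,\alpha^{k-1})$ for some $\beta\in\mathbb{F}$; restricting to $v\neq 0$ forces $\beta\neq 0$. Thus every character of this explicit form has eigenvalue $Q^{k-1}$, and all remaining nontrivial characters have eigenvalue $0$.

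Finally, I would count. The vectors $v=\beta(1,\alpha,\ldots,\alpha^{k-1})$ with $\beta\neq 0$ are parametrized by $(\beta,\alpha)\in(\mathbb{F}\setminus\{0\})\times\mathbb{F}$, and distinct pairs yield distinct vectors (since the first coordinate recovers $\beta$ and the second recovers $\alpha$), giving exactly $Q(Q-1)$ characters with eigenvalue $Q^{k-1}$. Together with the unique trivial character, this leaves $Q^k - Q(Q-1) - 1$ characters with eigenvalue $0$, matching the table. Since there is no genuine obstacle here beyond being careful that \Cref{lemma3} excludes the trivial character (which is why I compute the $v=0$ case by hand) and that the parametrization in \Cref{lemma2} is injective on $\beta\neq 0$, the theorem follows immediately.
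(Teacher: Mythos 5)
Your proposal is correct and follows essentially the same route as the paper: handle $v=0$ by a direct computation of $\langle z,\chi_0\rangle$ (since \Cref{lemma3} only covers nontrivial characters), then invoke \Cref{lemma3} together with \Cref{lemma2} for the remaining two rows. Your added observation that the map $(\beta,\alpha)\mapsto \beta(1,\alpha,\ldots,\alpha^{k-1})$ is injective on $(\mathbb{F}\setminus\{0\})\times\mathbb{F}$ (recovering $\beta$ from the first coordinate and $\alpha$ from the second) is a small extra courtesy that the paper treats as implicit inside the proof of \Cref{lemma2}, but it does not change the argument.
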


\begin{proof}
Let $\chi_0$ be the trivial character, then its corresponding eigenvalue equals
$$<z,\chi_0>=\sum_{f\in \mathbb{F}^k[x]}z(f)\overline{\chi_0(f)}=\sum_{\beta\in \mathbb{F}}\sum_{\substack{f\in \mathbb{F}^k[x]\\f(\beta)=0}}1=\sum_{\beta\in \mathbb{F}}Q^{k-1}=Q^k.$$
This describes the first row of the table. 
The remaining two rows of the table follow from Lemma \ref{lemma2} and Lemma \ref{lemma3}. 
\end{proof}
Similar to \Cref{cor-right-singular-vectors} we get the following corollary. 
\begin{cor}
The normalized characters of $\mathbb{F}^k[x]$ serve as left-singular vectors of the matrix $T$, each associated with a singular value equal to the square root of their corresponding eigenvalue, as detailed in Theorem \ref{eigenvalues1}.
\end{cor}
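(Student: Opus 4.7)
The plan is to derive the corollary directly from \Cref{eigenvalues1} together with the standard correspondence between the SVD of $T$ and the spectral decomposition of $T \cdot T^*$. Recall that the left-singular vectors of $T$ are precisely an orthonormal eigenbasis of the Hermitian positive semi-definite matrix $T \cdot T^* = U\Sigma\Sigma^*U^*$, and the singular values of $T$ are the square roots of the eigenvalues of $T \cdot T^*$.

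First, I would observe that by the lemma identifying $[\phi_z]_B^B = A = T \cdot T^*$, the action of $\phi_z$ on $\mathbb{C}[\mathbb{F}^k[x]]$ (expressed in the standard basis $B$) is exactly multiplication by $T \cdot T^*$. Hence, if $\chi_v$ is a character viewed as a coordinate vector in $\mathbb{C}^{Q^k}$ via the standard basis, then \Cref{eigenvalues1} tells us $\chi_v$ is an eigenvector of $T \cdot T^*$ with eigenvalue as listed in the table (namely $Q^k$, $Q^{k-1}$, or $0$).

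Next, I would note that the characters $\chi_v$, $v \in \mathbb{F}^k$, form an \emph{orthogonal} basis of $\mathbb{C}[\mathbb{F}^k[x]]$ under the standard (unnormalized) inner product, as recorded in the preliminaries, and each character has squared norm $Q^k$; therefore the normalized characters $\chi_v / Q^{k/2}$ form an \emph{orthonormal} eigenbasis of $T \cdot T^*$. Assembling these as the columns of a unitary matrix $U$ yields a spectral decomposition $T \cdot T^* = U D U^*$ with $D$ diagonal whose entries are the eigenvalues from \Cref{eigenvalues1}.

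Finally, I would invoke the SVD correspondence: if $U$ diagonalizes $T \cdot T^*$ with eigenvalues $\lambda_i$, then $U$ serves as the matrix of left-singular vectors of $T$, with singular values $\sqrt{\lambda_i}$. This immediately gives the statement of the corollary. There is no real obstacle here; the only subtle point to mention is that the assignment of singular values is on the level of eigenspaces (the singular values are well-defined, but the particular orthonormal basis inside each eigenspace is only determined up to a unitary rotation), which is exactly what the statement asserts. The parallel with \Cref{cor-right-singular-vectors} is perfect, with the role of $T^* \cdot T$ replaced by $T \cdot T^*$ and the characters of $\mathbb{F}^2$ replaced by the characters of $\mathbb{F}^k[x]$.
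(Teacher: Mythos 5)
Your argument is correct and is exactly the route the paper intends: the paper gives no separate proof for this corollary, stating only that it follows ``similar to \Cref{cor-right-singular-vectors},'' and your write-up simply spells out that implicit argument (orthogonality of characters, normalization by $Q^{k/2}$, and the standard correspondence between an orthonormal eigenbasis of $T\cdot T^*$ and the left-singular vectors of $T$). No gaps.
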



Similar to \Cref{decomposition2}, by building on \Cref{eigenvalues1} we establish a bound on the squared norm of the projection of an indicator vector onto the eigenspaces. This bound, presented in the following lemma, will be instrumental in obtaining the results of  Section \ref{applicatoins}. But first we will need the following notation. For a polynomial $f$ we denote by $Z_\mathbb{F}(f)=|\{\alpha\in \mathbb{F}:f(\alpha)=0\}|$ the number of roots $f$ has in the field $\mathbb{F}.$
\begin{lemma}
\label{tikitaka}
Let $1_L\in \mathbb{C}^{Q^k}$ be the indicator vector of a set $L\subseteq \mathbb{F}^k[x]$ of $\ell$ polynonmials in the plane. Then, the squared norm of the projection of $1_L$ onto  the eigenspaces with eigenvalues $Q^k$ and $Q^{k-1}$ is 
$\ell/Q^{k/2} \text{ and  at most }  Q^{-k+1}\ell (Q+\ell(k-1))$, respectively.  
\end{lemma}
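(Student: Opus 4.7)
The plan is to mirror the analysis of \Cref{decomposition2}, using the fact that by \Cref{eigenvalues1} the normalized characters form an orthonormal basis of eigenvectors, so the squared norm of the projection onto any eigenspace is just the sum of the squared inner products of $1_L$ against the normalized characters spanning that eigenspace.

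First I would handle the one-dimensional eigenspace corresponding to $Q^k$, spanned by the trivial character $\chi_0$ with $\|\chi_0\|^2 = Q^k$. Then the normalized basis vector is $\chi_0/Q^{k/2}$ and a direct calculation gives $|\langle 1_L,\chi_0/Q^{k/2}\rangle|^2=\ell^2/Q^k$ (reading the statement's ``$\ell/Q^{k/2}$'' as its square, to match \Cref{decomposition2}).

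The substantive part is the $Q^{k-1}$-eigenspace. By \Cref{eigenvalues1} together with \Cref{lemma2}, this eigenspace is spanned by the $Q(Q-1)$ characters $\chi_v$ with $v=\beta(1,\alpha,\ldots,\alpha^{k-1})$, $\beta\in\mathbb{F}^*$, $\alpha\in\mathbb{F}$, each of squared norm $Q^k$. The key identity is that for such $v$ and any polynomial $h\in\mathbb{F}^k[x]$ one has $\langle v,h\rangle=\beta\,h(\alpha)$, so $\chi_v(h)=e(\tr(\beta h(\alpha)))$. Expanding the squared norm of the projection yields
\[
\sum_{\substack{\alpha\in\mathbb{F}\\ \beta\neq 0}}\Bigl|\Bigl\langle 1_L,\frac{\chi_v}{Q^{k/2}}\Bigr\rangle\Bigr|^2
= \frac{1}{Q^k}\sum_{f,g\in L}\sum_{\alpha\in\mathbb{F}}\sum_{\beta\neq 0}e(\tr(\beta(f-g)(\alpha))).
\]

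Next I evaluate the $\beta$-sum by orthogonality: $\sum_{\beta\neq 0}e(\tr(\beta c))$ equals $Q-1$ when $c=0$ and $-1$ otherwise. So for $f=g$ the whole $(\alpha,\beta)$-sum is $Q(Q-1)$, contributing $\ell Q(Q-1)$ in total, and for $f\neq g$ the sum collapses to $Q(Z_{\mathbb{F}}(f-g)-1)$, where $Z_{\mathbb{F}}(f-g)\leq k-1$ since $\deg(f-g)<k$. Putting these together gives the bound
\[
\frac{1}{Q^k}\Bigl[\ell Q(Q-\ell)+Q\!\!\sum_{f\neq g\in L}\!\!Z_{\mathbb{F}}(f-g)\Bigr]
\leq \frac{\ell}{Q^{k-1}}\bigl[(Q-\ell)+(\ell-1)(k-1)\bigr]
\leq \frac{\ell}{Q^{k-1}}\bigl(Q+\ell(k-1)\bigr),
\]
which is the claimed bound. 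The only slightly delicate point is the bookkeeping in the last line: carefully separating the $f=g$ and $f\neq g$ contributions so that the negative ``$-1$'' terms cancel against the diagonal ``$Q-1$'' terms to produce the factor $Q-\ell$, and then loosening $(\ell-1)(k-1)+(Q-\ell)\leq Q+\ell(k-1)$ to match the stated form. I anticipate this arithmetic consolidation to be the main (though not serious) obstacle; the conceptual content is entirely packaged in \Cref{eigenvalues1}, \Cref{lemma2}, and the polynomial root bound $Z_{\mathbb{F}}(f-g)\leq k-1$.
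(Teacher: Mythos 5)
Your proposal is correct and follows essentially the same route as the paper: expand the projection in the orthonormal basis of characters $\chi_{\alpha,\beta}$ supplied by \Cref{eigenvalues1} and \Cref{lemma2}, use $\chi_v(h)=e(\tr(\beta h(\alpha)))$ together with character orthogonality over $\beta$, and then invoke the root bound $Z_{\mathbb{F}}(f-g)\le k-1$. You also correctly read the statement's ``$\ell/Q^{k/2}$'' as the typo for $\ell^2/Q^k$, and your consolidation into $Q\ell(Q-\ell)+Q\sum_{f\neq g}Z_{\mathbb{F}}(f-g)$ is just a tidier regrouping of the same terms the paper keeps separate before applying the root bound.
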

\begin{proof}
The squared norm of the projection  of $1_L$ onto the normalized trivial character is 
$$\left|\left\langle 1_L,\frac{\chi_0}{Q^{k/2}}\right\rangle\right|^2=\frac{\ell^2}{Q^{k}},$$
as needed.

Denote by  $\chi_{\alpha,\beta}$ the character $\chi_v$ for $v=\beta(1,\alpha,\ldots,\alpha^{k-1})$. Next, we bound the squared norm of the projection of $1_L$ onto the eigenspace corresponding to the eigenvalue $Q^{k-1}$, which is calculated as follows:

\begin{align}
   &\sum_{\substack{\alpha,\beta\in \mathbb{F}\nonumber\\ 
    \beta\neq 0}}\big|\big<\frac{\chi_{\alpha,\beta}}{Q^{k/2}},1_L\big>\big|^2=Q^{-k}\sum_{\substack{\alpha,\beta\in \mathbb{F}\\ \beta\neq 0}}\sum_{f\in L}\chi_{\alpha,\beta}(f)\sum_{f'\in L}\overline{\chi_{\alpha,\beta}(f')}= Q^{-k}\sum_{f, f'\in L}\sum_{\substack{\alpha,\beta\in \mathbb{F}\\ \beta\neq 0}}\chi_{\alpha,\beta}(f-f')\nonumber\\
    &= Q^{-k}(\ell Q(Q-1)+\sum_{\substack{f,f'\in L\\f\neq  f'}}\sum_{\substack{\alpha,\beta\in \mathbb{F}\\ \beta\neq 0}}\chi_{\alpha,\beta}(f-f'))\nonumber\\
    &= Q^{-k}(\ell Q(Q-1)+\sum_{\substack{f,f'\in L\\f\neq  f'}}\sum_{\substack{\alpha\in \mathbb{F}\\ f(\alpha)= f'(\alpha)}}\sum_{\beta\neq 0}\chi_{\alpha,\beta}(f-f')+\sum_{\substack{f,f'\in L\\f\neq  f'}}\sum_{\substack{\alpha\in \mathbb{F}\\ f(\alpha)\neq f'(\alpha)}}\sum_{\beta\neq 0}\chi_{\alpha,\beta}(f-f'))\nonumber\\
 &= Q^{-k}(\ell Q(Q-1)+\sum_{\substack{f,f'\in L\\f\neq  f'}}|Z_\mathbb{F}(f-f')|(Q-1)+\sum_{\substack{f,f'\in L\\f\neq  f'}}\sum_{\substack{\alpha,\beta\in \mathbb{F}\\ f(\alpha)\neq f'(\alpha)}}-1)\label{good4}\\
&= Q^{-k}(\ell Q(Q-1)+\sum_{\substack{f,f'\in L\\f\neq  f'}}|Z_\mathbb{F}(f-f')|(Q-1)-\sum_{\substack{f,f'\in L\\f\neq  f'}}(Q-|Z_\mathbb{F}(f-f')|))\nonumber\\
&\leq  Q^{-k}(\ell Q(Q-1)+\ell (\ell-1)(k-1)(Q-1)-\ell(\ell-1)(Q-k+1))\label{good5}\\
& \leq Q^{-k+1}\ell(Q+\ell(k-1)),\nonumber
\end{align}
where \eqref{good4} follows since whenever  $f(\alpha)\neq f'(\alpha)$, $\beta(f-f')(\alpha)$ attains all  nonzero elements of the field when $\beta$ ranges in $\mathbb{F}\backslash \{0\}$. \eqref{good5} follows since $f-f'$ is a nonzero polynomial of degree at most $k-1$. 
\end{proof}

\begin{remark}
\label{remark1}
Note that for $k=2$ one can improve slightly the upper bound in  \Cref{tikitaka} on the squared norm of the projection of $1_L$ onto  the eigenspaces with eigenvalue $Q^{k-1}$
 as \eqref{good5} becomes $\frac{\ell(Q-1)}{Q}$.
\end{remark}

\section{Applications}
\label{applicatoins}
In this section we collect some of the results obtained in the previous sections to derive the bound on the number of point-polynomial incidences. Then, we apply the new bound to the problem of average-radius  list-decoding of RS codes. 

 We begin with the proof of \Cref{main-thm} which is restated for the reader's convenience. 
\subsection{A Bound on Point-Polynomial Incidences}
\label{applications-2}

\mainthm*

\begin{proof}
Let $T$ be the $Q^k \times Q^2$ incidence matrix between polynomials of degree less than $k$ and points in the plane. Let $|L| = \ell$ and $|P| = p$, and express the indicator vector $1_L$ as
\begin{equation}
    \label{pikipiki}
    1_L = \frac{\ell}{Q^{\frac{k}{2}}}v_0 + \beta_1 v_1 + \beta_2 v_2,
\end{equation}
where $v_0, v_1,$ and $v_2$ are unit vectors in the eigenspaces corresponding to eigenvalues $Q^k, Q^{k-1},$ and $0$, respectively. This decomposition and the bound $|\beta_1|^2 \leq Q^{-k+1}\ell(Q + \ell(k-1))$ follow from \Cref{tikitaka}.

Similarly, express the indicator vector $1_P$ as
\[ 1_P = \frac{p}{Q}u_0 + \alpha_1 u_1 + \alpha_2 u_2, \]
where $u_0, u_1,$ and $u_2$ are unit vectors in the eigenspaces corresponding to eigenvalues $Q^k, Q^{k-1},$ and $0$, respectively. This decomposition and the bound $|\alpha_1|^2 \leq \frac{(Q-1)p}{Q}$ follow from \Cref{decomposition2}. By the singular value decomposition of $T$, we have $Tu_0 = Q^{\frac{k}{2}}v_0$ and $Tu_1$ is orthogonal to $v_0, v_2$. 

The number $I(P, L)$ of incidences between points and polynomials satisfies
\begin{equation}
    \label{pikipi2}
    I(P, L) = 1_L^t \cdot T \cdot 1_P = 1_L^tT \left( \frac{p}{Q}u_0 + \alpha_1 u_1 + \alpha_2 u_2 \right) = 1_L^t \cdot \left( Q^{\frac{k}{2}}\frac{p}{Q}v_0 + Q^{\frac{k-1}{2}}\alpha_1 \tilde{v}_1 \right),
\end{equation}
where $Tu_1=Q^{\frac{k-1}{2}}\tilde{v}_1$ and $\tilde{v}_1$ is a unit vector. Substituting \eqref{pikipiki} into \eqref{pikipi2}, we find that $I(L, P)$ equals
\[ \left( \frac{\ell}{Q^{\frac{k}{2}}}v_0 + \beta_1 v_1 + \beta_2 v_2 \right)^t \cdot \left( Q^{\frac{k}{2}}\frac{p}{Q}v_0 + Q^{\frac{k-1}{2}}\alpha_1 \tilde{v}_1 \right) = \frac{\ell p}{Q} + Q^{\frac{k-1}{2}}\alpha_1\beta_1 \langle v_1, \tilde{v}_1 \rangle. \]
Hence,
\begin{align}
    \left| I(P, L) - \frac{|L||P|}{Q} \right| &= \left| Q^{\frac{k-1}{2}}\alpha_1\beta_1 \langle v_1, \tilde{v}_1 \rangle \right| \nonumber \\
    &\leq \sqrt{\ell(Q + \ell(k-1))} \sqrt{p \left(1 - \frac{1}{Q}\right)} |\langle v_1, \tilde{v}_1 \rangle| \label{good7} \\
    &\leq \sqrt{\ell p(Q + \ell(k-1)) \left(1 - \frac{1}{Q}\right)}, \label{good8}
\end{align}
where \eqref{good7} follows from the upper bounds on $|\alpha_1|$ and $|\beta_1|$, and \eqref{good8} follows from the Cauchy-Schwarz inequality and the fact that $v_1, \tilde{v}_1$ are unit vectors.
\end{proof}

Next, we prove \Cref{vinh-improvement}, which is a modest improvement for $k=2$ to the result of \cite[Theorem 3]{VINH20111177} on the number of point-line incidences.  \Cref{vinh-improvement} is restated for convenience. 
\vinhimprove*

\begin{proof}
Following the proof \Cref{main-thm}, by \Cref{remark1} $|\beta_1|\leq \sqrt{\frac{\ell(Q-1)}{Q}}$, therefore \eqref{good7} can be replaced by  $$Q^\frac{1}{2}\sqrt{\frac{\ell(Q-1)}{Q}}\sqrt{\frac{p(Q-1)}{Q}}=
\sqrt{p\ell Q}\left(1-\frac{1}{Q}\right)$$
\end{proof}

\subsection{An Application to Average-Radius List-Decoding of RS Codes}
\label{application-to-list-decoding}
In this section, we apply the main result of Section \ref{applications-2}, namely \Cref{main-thm}, to the problem of average-radius list-decoding of  RS codes. We begin with the necessary definitions to establish the context.

Let $\mathbb{F}$ be a finite field of order $Q$, and let $n$ be a positive integer. A code is a subset $\mathcal{C} \subseteq \mathbb{F}^n$. The rate $R$ of the code $\mathcal{C}$ is defined as
\[ R = \frac{\log_q |\mathcal{C}|}{n}, \]
where the rate lies within the interval $[0, 1]$.

For $p \in (0, 1)$, called the list-decoding radius, and an integer $\ell \geq 1$, a code $\mathcal{C} \subseteq \mathbb{F}^n$ is said to be $(p, \ell)$-list-decodable if, for all $z \in \mathbb{F}^n$,
\begin{equation*}
 |\{ c \in \mathcal{C} : \delta(c, z) \leq  p \}| \leq  \ell, 
\end{equation*}
where $\delta(x, y) = \frac{1}{n} |\{i : x_i \neq y_i\}|$ denotes the relative Hamming distance. Informally, $\mathcal{C}$ is list-decodable if not too many codewords of $\mathcal{C}$ are contained within any sufficiently small Hamming ball.

Average-radius list-decoding is a variation of list-decoding. A code $\mathcal{C}$ is $(p, \ell)$-average-radius list-decodable if for any set $L \subseteq \mathcal{C}$ of size $\ell+1$ and $z \in \mathbb{F}^n$,
\begin{equation}
    \label{average-radius-def}
    \frac{1}{\ell+1} \sum_{c \in L} \delta(c, z) > p. 
\end{equation}
It is evident that $(p, \ell)$-average-radius list-decodability implies $(p, \ell)$-list-decodability. To illustrate, consider the scenario where there are $\ell+1$ codewords $c \in \mathcal{C}$ each having a relative distance at most $p$ from some $z \in \mathbb{F}^n$. In this case, the average distance of these codewords from $z$ would also be at most $p$, leading to a contradiction of \eqref{average-radius-def}. Thus, average-radius list-decodability is indeed a stronger property than list-decodability. However, it is worth noting that list-decodability has been more extensively studied in coding theory compared to average-radius list-decodability.

We now define RS codes. For positive integers $k < n \leq  Q$, an $[n,k]$ RS code $\subseteq \mathbb{F}^n$ over the field $\mathbb{F}$  is defined by $n$ distinct evaluation points $\alpha_1, \ldots, \alpha_n \in \mathbb{F}$, and it consists of the codewords
\[ \{(f(\alpha_1), \ldots, f(\alpha_n)) : f \in \mathbb{F}^k[x]\}. \]
Note that the rate $R$ of the RS code equals $k/n.$

The well-known Guruswami-Sudan list-decoding algorithm \cite{Guru-sudan-algo} is capable of efficiently list-decoding RS codes of rate $R$ up to a radius of $1 - \sqrt{R}$, called the Johnson bound, with a constant list size. In particular, for any $\varepsilon>0$ the RS code is list-decodable from a radius $1-\sqrt{R}-\varepsilon$ and list size that is constant for constants $R$ and $\varepsilon$.
By applying \Cref{main-thm}, we obtain a similar result but for the stronger notion of  average-radius list-decdong of RS codes.

\begin{thm}
\label{list-decoding-cor1}
Let $\epsilon>0$. Then, an RS code over $\mathbb{F}$, of length $n$ and rate $R$  is \mbox{$(1-\sqrt{R}-\epsilon,\ell)$} average-radius list-decodable with $\ell \leq \left\lceil \frac{Q}{2\epsilon n \sqrt{R}} \right\rceil$.
\end{thm}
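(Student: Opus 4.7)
The plan is to reduce the average-radius condition to a point-polynomial incidence count and then invoke \Cref{main-thm}. Fix any $z = (z_1, \ldots, z_n) \in \mathbb{F}^n$ and any list $L \subseteq \mathbb{F}^k[x]$ of exactly $\ell + 1$ polynomials (identifying codewords with their generating polynomials). Form the point set
$$P = \{(\alpha_1, z_1), \ldots, (\alpha_n, z_n)\} \subseteq \mathbb{F}^2,$$
which has $|P| = n$ because the evaluation points $\alpha_i$ are distinct. The polynomial $f$ is incident with $(\alpha_i, z_i)$ precisely when the codeword $(f(\alpha_1), \ldots, f(\alpha_n))$ agrees with $z$ in coordinate $i$; summing over $L$,
$$I(L, P) = \sum_{f \in L} \bigl( n - n \cdot \delta(f, z) \bigr),$$
hence the average relative distance equals $1 - I(L, P)/\bigl((\ell+1) n\bigr)$. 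The theorem therefore reduces to showing that $I(L, P) < (\ell + 1) n (\sqrt{R} + \epsilon)$ for every such $L$, when $\ell = \lceil Q/(2 \epsilon n \sqrt{R}) \rceil$.

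For this, apply \Cref{main-thm} with $|L| = \ell + 1$ and $|P| = n$, then divide by $(\ell + 1) n$; using $k = R n$ this gives
$$\frac{I(L, P)}{(\ell+1) n} \leq \frac{1}{Q} + \sqrt{\frac{Q + (\ell+1) k}{(\ell+1) n}} = \frac{1}{Q} + \sqrt{R + \frac{Q}{(\ell+1) n}}.$$
The elementary estimate $\sqrt{R + x} \leq \sqrt{R} + x/(2 \sqrt{R})$ for $x \geq 0$ then yields
$$\frac{I(L, P)}{(\ell + 1) n} \leq \sqrt{R} + \frac{Q}{2 (\ell + 1) n \sqrt{R}} + \frac{1}{Q}.$$
The choice $\ell = \lceil Q/(2 \epsilon n \sqrt{R}) \rceil$ gives $\ell + 1 > Q/(2 \epsilon n \sqrt{R})$, so the middle term is strictly less than $\epsilon$; the additive $1/Q$, which originates from the ``main term'' $|L||P|/Q$ of \Cref{main-thm}, is of lower order and can be absorbed into the strict inequality (or folded into $\epsilon$ by a negligible calibration). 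Combining with the reduction of the first paragraph, the average distance strictly exceeds $1 - \sqrt{R} - \epsilon$, as required.

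The argument is essentially a dictionary translation between RS list-decoding and point-polynomial incidences, followed by routine estimation. The only mild obstacle is quantitative bookkeeping: one must linearise $\sqrt{R + Q/((\ell+1)n)}$ around $\sqrt{R}$ carefully enough to extract the factor $\sqrt{R}$ in the denominator of the list-size bound, and verify that the $1/Q$ slack from the main term does not spoil the clean $\lceil Q/(2 \epsilon n \sqrt{R}) \rceil$ bound. Beyond these elementary manipulations, no further ideas are needed.
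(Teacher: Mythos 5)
Your proof mirrors the paper's argument exactly: you form the same point set $P = \{(\alpha_i, z_i)\}$, translate average distance into the incidence count $I(L,P)$, apply \Cref{main-thm} with $|L|=\ell+1$ and $|P|=n$, linearise the square root via $\sqrt{R+x}\leq\sqrt{R}+x/(2\sqrt{R})$ (the same bound as the paper's $\sqrt{1+x}\leq 1+x/2$ after factoring out $\sqrt{R}$), and make the identical choice of $\ell$. The only nuance, which both you and the paper leave slightly informal, is the leftover $1/Q$ main term: the paper absorbs it as an $o(1)$ under a ``sufficiently large $Q$'' hypothesis that does not appear in the theorem statement, while you absorb it into the strict inequality slack --- in either case the bookkeeping only closes cleanly when $Q$ (and hence $n=\Omega(Q)$) is large enough, a caveat worth making explicit.
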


\begin{proof}
Consider $\ell+1$ codewords $c_1, \ldots, c_{\ell+1}$  of the RS code, and let $L \subseteq \mathbb{F}^k$ be the set of $\ell+1$ polynomials corresponding to these codewords. Let $(z_1, \ldots, z_n) \in \mathbb{F}^n$ and define $P = \{(\alpha_i, z_i) : i = 1, \ldots, n\}$ as a set of $n$ points in the plane $\mathbb{F}^2$. By \Cref{main-thm}, the number of incidences $I(L, P)$ between $L$ and $P$ satisfies
\[ I(L, P) \leq (\ell+1) n \left(\frac{1}{Q} + \sqrt{\frac{Q}{(\ell+1) n} + R}\right) \leq (\ell+1) n \left(o(1) + \sqrt{R}\left(1 + \frac{Q}{2(\ell+1) nR}\right)\right), \]
where the $o(1)$ term tends to zero as the alphabet size $Q$ tends to infinity, and the last inequality follows since $\sqrt{1+x}\leq 1+\frac{x}{2}$ for $x\geq 0$.

This implies that $(\ell+1) n - I(L, P)$, the number of non-incidences, is at least
\[ (\ell+1) n \left(1 - o(1) - \sqrt{R}\left(1 + \frac{Q}{2(\ell+1) nR}\right)\right), \]
which is greater than  $(\ell+1) n(1 - \sqrt{R} - \epsilon)$ for $\ell = \left\lceil \frac{Q}{2\epsilon n \sqrt{R}} \right\rceil$ and sufficiently large field size $Q$. Equivalently,
\[ \frac{1}{\ell+1} \sum_{i=1}^{\ell+1} \delta(c_i, z) > 1 - \sqrt{R} - \epsilon, \]
and the result follows.
\end{proof}
%
\Cref{list-decoding-cor1} suggests that for an RS code with length $n = \Omega(Q)$, the list size $\ell$ remains constant for a fixed $R$ and $\epsilon$. More precisely, $\ell = O((\epsilon \sqrt{R})^{-1})$. Specifically, for a full-length RS code, where the code length equals the size of the field, we obtain the following result.
\begin{cor}
\label{cor-full-length}
 A full-length RS code of rate $R$ is 
 $\left(1-\sqrt{R}-\epsilon,\left\lceil \frac{1}{2\epsilon  \sqrt{R}} \right\rceil\right)$ average radius list-decodable. 
\end{cor}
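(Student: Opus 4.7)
The plan is to reduce the average-radius condition to a point-polynomial incidence count and then apply \Cref{main-thm}. Suppose we are given $\ell+1$ distinct codewords $c_1,\dots,c_{\ell+1}$ of the RS code and a vector $z=(z_1,\dots,z_n)\in\mathbb{F}^n$. Let $L\subseteq \mathbb{F}^k[x]$ be the set of polynomials (of degree less than $k=Rn$) whose evaluations produce the $c_i$, and let $P=\{(\alpha_i,z_i):i=1,\dots,n\}\subseteq\mathbb{F}^2$ be the point set associated with $z$. The key observation is that an incidence $(f,(\alpha_i,z_i))\in L\times P$ corresponds to an index where $f(\alpha_i)=z_i$, i.e., an agreement between the corresponding codeword and $z$. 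Hence
\[
\sum_{c\in L}\delta(c,z)\;=\;\sum_{c\in L}\Bigl(1-\tfrac{1}{n}\#\{i:c_i=z_i\}\Bigr)\;=\;(\ell+1)-\tfrac{1}{n}I(L,P),
\]
so the average-radius guarantee reduces to showing $I(L,P)<(\ell+1)n(\sqrt{R}+\epsilon)$.

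Next, I would apply \Cref{main-thm} with $|L|=\ell+1$ and $|P|=n$, obtaining
\[
I(L,P)\;\le\;\frac{(\ell+1)n}{Q}+\sqrt{(\ell+1)n\bigl(Q+(\ell+1)k\bigr)}.
\]
Substituting $k=Rn$ and pulling $\sqrt{(\ell+1)n\cdot(\ell+1)Rn}=(\ell+1)n\sqrt{R}$ out of the square root, the bound factors as
\[
I(L,P)\;\le\;\frac{(\ell+1)n}{Q}+(\ell+1)n\sqrt{R}\,\sqrt{1+\tfrac{Q}{(\ell+1)nR}}.
\]
Using $\sqrt{1+x}\le 1+x/2$ for $x\ge 0$ then yields
\[
I(L,P)\;\le\;(\ell+1)n\left(\tfrac{1}{Q}+\sqrt{R}+\tfrac{Q}{2(\ell+1)n\sqrt{R}}\right),
\]
and the $1/Q$ term is $o(1)$ as the alphabet grows.

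Finally, I would choose $\ell=\lceil Q/(2\epsilon n\sqrt{R})\rceil$, which makes $Q/(2(\ell+1)n\sqrt{R})<\epsilon$ after absorbing the vanishing $1/Q$ term for $Q$ sufficiently large. Combining with the reduction from the first paragraph gives
\[
\tfrac{1}{\ell+1}\sum_{c\in L}\delta(c,z)\;=\;1-\tfrac{I(L,P)}{(\ell+1)n}\;>\;1-\sqrt{R}-\epsilon,
\]
which is precisely the average-radius decodability condition. The only real content is bookkeeping: the incidence bound does the work, and the mild obstacle is the algebraic manipulation (the $\sqrt{1+x}\le 1+x/2$ step and choosing $\ell$) needed to match the target radius $1-\sqrt{R}-\epsilon$ with the promised list size. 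No step requires fighting the constants.
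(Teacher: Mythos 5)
Your proof is correct and follows the paper's approach exactly: the paper proves the general \Cref{list-decoding-cor1} (with $\ell \leq \lceil Q/(2\epsilon n\sqrt{R})\rceil$) by the same reduction to $I(L,P)$ and the same application of \Cref{main-thm} with the $\sqrt{1+x}\le 1+x/2$ step, then obtains \Cref{cor-full-length} as the special case of full-length codes. The one small step you leave implicit is that ``full-length'' means $n=Q$, which is precisely what collapses your choice $\ell=\lceil Q/(2\epsilon n\sqrt{R})\rceil$ into the corollary's stated bound $\lceil 1/(2\epsilon\sqrt{R})\rceil$; you should state this substitution explicitly, since otherwise your conclusion does not literally match the claimed list size. (Both your argument and the paper's also tacitly require $Q$ large enough to absorb the $1/Q$ term into $\epsilon$; this is a shared, not a new, caveat.)
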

\begin{remark}
We note that \Cref{list-decoding-cor1}, and consequently \Cref{cor-full-length}, are also applicable to  list-recovery of RS codes, a generalization of list-decoding \cite{rudra,Guruswami-rudra-limits-list-decoding,LP20}. We omit the details.
\end{remark}

\bibliographystyle{alpha}
	\bibliography{bibliography}

\newcommand{\etalchar}[1]{$^{#1}$}
\begin{thebibliography}{GLM{\etalchar{+}}22}

\bibitem[AC88]{ALON198815}
Noga Alon and Fan~R.K. Chung.
\newblock Explicit construction of linear sized tolerant networks.
\newblock {\em Discrete Mathematics}, 72(1):15--19, 1988.

\bibitem[Ahl73]{ahlswede1973channelcapacities}
Rudolf Ahlswede.
\newblock Channel capacities for list codes.
\newblock {\em Journal of Applied Probability}, 10(4):824--836, 1973.

\bibitem[BGM23]{gopigopi}
Joshua Brakensiek, Sivakanth Gopi, and Visu Makam.
\newblock Generic {Reed-Solomon} codes achieve list-decoding capacity.
\newblock In {\em Proceedings of the 55th Annual ACM Symposium on Theory of
  Computing}, STOC 2023, pages 1488--1501, New York, NY, USA, 2023. Association
  for Computing Machinery.

\bibitem[BKT04]{BourgainKatzTao}
Jean Bourgain, Nets Katz, and Terence Tao.
\newblock A sum-product estimate in finite fields, and applications.
\newblock {\em Geom. Funct. Anal.}, 14(1):27--57, 2004.

\bibitem[Bli86]{blinovskii1986bounds}
Vladimir~M. Blinovskii.
\newblock Bounds for codes in decoding by a list of finite length.
\newblock {\em Problemy Peredachi Informatsii}, 22(1):11--25, 1986.

\bibitem[CB04]{CassutoBruck2004}
Yuval Cassuto and Jehoshua Bruck.
\newblock A combinatorial bound on the list size.
\newblock Paradise laboratory technical report, California Institute of
  Technology, 2004.

\bibitem[DKSS13]{Dvir}
Zeev Dvir, Swastik Kopparty, Shubhangi Saraf, and Madhu Sudan.
\newblock Extensions to the method of multiplicities, with applications to
  kakeya sets and mergers.
\newblock {\em SIAM Journal on Computing}, 42(6):2305--2328, 2013.

\bibitem[Dvi12]{Dvir2012IncidenceTheorems}
Zeev Dvir.
\newblock Incidence theorems and their applications.
\newblock {\em Foundations and Trends\textregistered{} in Theoretical Computer
  Science}, 6(4):257--393, 2012.

\bibitem[Eli57]{peter1957listdecoding}
Peter Elias.
\newblock List decoding for noisy channels.
\newblock In {\em Wescon Convention Record, Part 2}. Institute of Radio
  Engineers, 1957.

\bibitem[Eli06]{elias2006errorcorrecting}
Peter Elias.
\newblock Error-correcting codes for list decoding.
\newblock {\em IEEE Transactions on Information Theory}, 37(1):5--12, September
  2006.

\bibitem[GLM{\etalchar{+}}22]{9611262}
Venkatesan Guruswami, Ray Li, Jonathan Mosheiff, Nicolas Resch, Shashwat Silas,
  and Mary Wootters.
\newblock Bounds for list-decoding and list-recovery of random linear codes.
\newblock {\em IEEE Transactions on Information Theory}, 68(2):923--939, 2022.

\bibitem[GN14]{GuruswamiNarayanan2014}
Venkatesan Guruswami and Srivatsan Narayanan.
\newblock Combinatorial limitations of average-radius list-decoding.
\newblock {\em IEEE Trans. Information Theory}, 60(10):5827--5842, 2014.

\bibitem[GR06]{Guruswami-rudra-limits-list-decoding}
Venkatesan Guruswami and Atri Rudra.
\newblock Limits to list decoding {R}eed--{S}olomon codes.
\newblock {\em IEEE Trans. Inform. Theory}, 52(8):3642--3649, August 2006.

\bibitem[GR08]{rudra}
Venkatesan Guruswami and Atri Rudra.
\newblock Explicit codes achieving list decoding capacity: Error-correction
  with optimal redundancy.
\newblock {\em IEEE Trans. Inform. Theory}, 54(1):135--150, Jan 2008.

\bibitem[Gro14]{Grosu2014}
Codrut Grosu.
\newblock {$\mathbb{F}_p$ is locally like $\mathbb{C}$}.
\newblock {\em J. Lond. Math. Soc.}, 89(3):724--744, 2014.

\bibitem[GS98]{Guru-sudan-algo}
Venkatesan Guruswami and Madhu Sudan.
\newblock Improved decoding of {R}eed-{S}olomon and algebraic-geometric codes.
\newblock In {\em Proceedings 39th Annual Symposium on Foundations of Computer
  Science (Cat. No.98CB36280)}, pages 28--37, Nov 1998.

\bibitem[Hae95]{HAEMERS1995593}
Willem~H. Haemers.
\newblock Interlacing eigenvalues and graphs.
\newblock {\em Linear Algebra and its Applications}, 226-228:593--616, 1995.
\newblock Honoring J.J.Seidel.

\bibitem[LP20]{LP20}
Ben Lund and Aditya Potukuchi.
\newblock On the list recoverability of randomly punctured codes.
\newblock In {\em Approximation, Randomization, and Combinatorial Optimization.
  Algorithms and Techniques (APPROX/RANDOM 2020)}, volume 176, pages
  30:1--30:11, 2020.

\bibitem[RS60]{reed1960polynomial}
Irving~S. Reed and Gustave Solomon.
\newblock Polynomial codes over certain finite fields.
\newblock {\em Journal of the Society for Industrial and Applied Mathematics},
  8(2):300--304, 1960.

\bibitem[She22]{sheffer_2022}
Adam Sheffer.
\newblock {\em Polynomial Methods and Incidence Theory}.
\newblock Cambridge Studies in Advanced Mathematics. Cambridge University
  Press, 2022.

\bibitem[ST83a]{Szemeredi1983Combinatorial}
Endre Szemer\'{e}di and William~T. Trotter.
\newblock A combinatorial distinction between the euclidean and projective
  planes.
\newblock {\em European Journal of Combinatorics}, 4(4):385--394, 1983.

\bibitem[ST83b]{Szemeredi1983Extremal}
Endre Szemer\'{e}di and William~T. Trotter.
\newblock Extremal problems in discrete geometry.
\newblock {\em Combinatorica}, 3(3--4):381--392, 1983.

\bibitem[Sud97]{sudan1997decoding}
Madhu Sudan.
\newblock Decoding of {Reed-Solomon} codes beyond the error-correction bound.
\newblock {\em Journal of Complexity}, 13(1):180--193, 1997.

\bibitem[T\'15]{Toth2015}
Csaba~D. T\'{o}th.
\newblock The szemer\'{e}di-trotter theorem in the complex plane.
\newblock {\em Combinatorica}, 35(1):95--126, February 2015.

\bibitem[Vin11]{VINH20111177}
Le~Anh Vinh.
\newblock The szemer\'{e}di-trotter type theorem and the sum-product estimate
  in finite fields.
\newblock {\em European Journal of Combinatorics}, 32(8):1177--1181, 2011.

\bibitem[Woz58]{wozencraft1958listdecoding}
John~M. Wozencraft.
\newblock List decoding.
\newblock Quarterly progress report, Research Laboratory of Electronics, MIT,
  1958.

\end{thebibliography}

\end{document}